\newtheorem{theorem}{Theorem}[section]
\newtheorem{proposition}[theorem]{Proposition}
\newtheorem{corollary}{Corollary}[theorem]
\theoremstyle{definition}
\theoremstyle{remark}
\numberwithin{equation}{section}
\newcommand{\RN}[1]{%
  \textup{\uppercase\expandafter{\romannumeral#1}}%
}
\newcommand{\rn}[1]{%
  {\lowercase\expandafter{(\romannumeral#1)}}%
}
\newcommand{\cA}{\mathcal{A}}
\DeclareMathOperator{\tr}{tr}
\DeclareMathOperator{\Ai}{Ai}
\DeclareMathOperator{\Bess}{Bess}
\newcommand{\twoone}[2]{\begin{bmatrix}#1 \\ #2\end{bmatrix}}
\newcommand{\twotwo}[4]{\begin{bmatrix}#1 & #2\\#3 & #4\end{bmatrix}}
\newcommand{\rt}{\!\!\restriction}
\newcommand{\Ks}{K_{\sin}}
\newcommand{\Kai}{K_{\Ai}}
\newcommand{\Kbess}[1]{K_{\Bess, #1}}
\newcommand{\prob}[1]{\mathbb{P}(\text{#1})}
\renewcommand*\env@matrix[1][*\c@MaxMatrixCols c]{%
  \hskip -\arraycolsep
  \let\@ifnextchar\new@ifnextchar
  \array{#1}}
\begin{document}

\title[Conditional DPP \& New determinantal expressions]{The conditional DPP approach to random matrix distributions}

\author{Alan Edelman}
\address{Department of Mathematics and Computer Science \& AI Laboratory, Massachusetts Institute of Technology, Cambridge, Massachusetts}
\email{edelman@mit.edu}

\author{Sungwoo Jeong}
\address{Department of Mathematics, Cornell University, Ithaca, New York}
\email{sjeong@cornell.edu}

\begin{abstract}
We present the conditional determinantal point process (DPP) approach to obtain new (mostly Fredholm determinantal) expressions for various eigenvalue statistics in random matrix theory. It is well-known that many (especially $\beta=2$) eigenvalue $n$-point correlation functions are given in terms of $n\times n$ determinants, i.e., they are continuous DPPs. We exploit a derived kernel of the conditional DPP which gives the $n$-point correlation function conditioned on the event of some eigenvalues already existing at fixed locations. 

\par Using such kernels we obtain new determinantal expressions for the joint densities of the $k$ largest eigenvalues, probability density functions of the $k^\text{th}$ largest eigenvalue, density of the first eigenvalue spacing, and more. Our formulae are highly amenable to numerical computations and we provide various numerical experiments. Several numerical values that required hours of computing time can now be computed in seconds with our expressions, which proves the effectiveness of our approach.

\par We also demonstrate that our technique can be applied to an efficient sampling of DR paths of the Aztec diamond domino tiling. Further extending the conditional DPP sampling technique, we sample Airy processes from the extended Airy kernel. Additionally we propose a sampling method for non-Hermitian projection DPPs. 
\end{abstract}

\maketitle

% TODO left 
\section{Introduction}

\subsection{The Conditional DPP Method}

This paper shows that conditional determinantal point processes (DPP) can be exploited in novel ways to create interesting expressions and yield highly efficient algorithms for computations in random matrix theory and beyond. We call this the \textit{conditional DPP approach}. Figure~\ref{fig:gallery} presents a gallery of examples created by these new expressions.

\begin{figure}[t]
    \centering
    \begin{tabular}{|c|c|}
    \hline
    \multicolumn{2}{|c|}{\makecell[b]{\vspace{-0.3cm} \\ Correlation coefficient (soft-edge) \\ {\large $\rho(\lambda_1, \lambda_2) = 0.50564723159...$} \vspace{0.1cm}\\ 
    Computation time : \textbf{16 hours} (2010, \cite{bornemann2010numerical}) $\longrightarrow$ \textbf{2 minutes} (Section~\ref{sec:klargest}) 
    %$\begin{array}{c|cc} & \text{(2010, \cite{bornemann2010numerical})} &  \text{Section~\ref{sec:klargest}} \\ 
    %\hline \text{Computation time} & \text{16 hours} & \text{2 minutes} \end{array}$ 
    }}   \\ 
    \hline
    \includegraphics[width=0.45\textwidth]{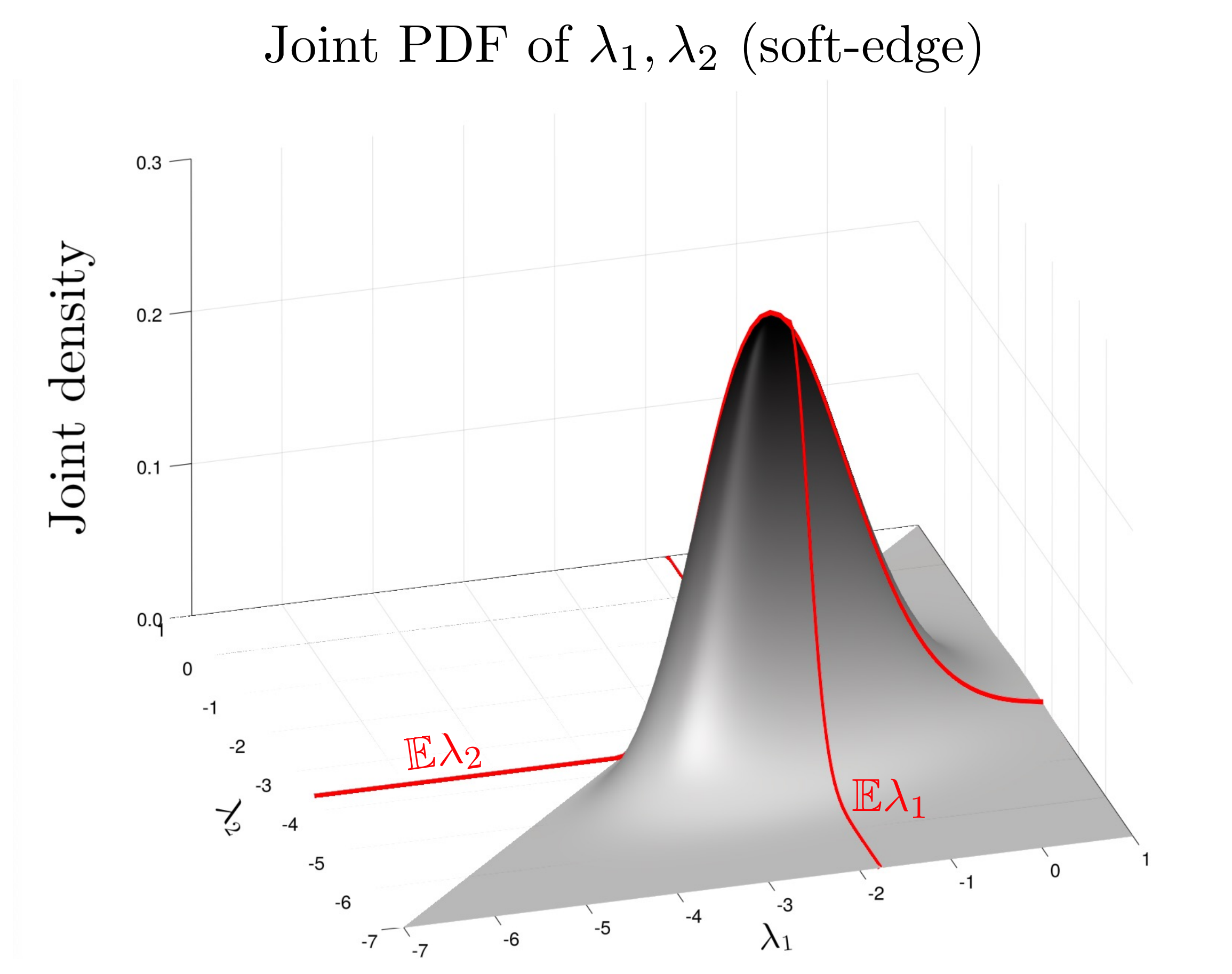}& \includegraphics[width=0.45\textwidth]{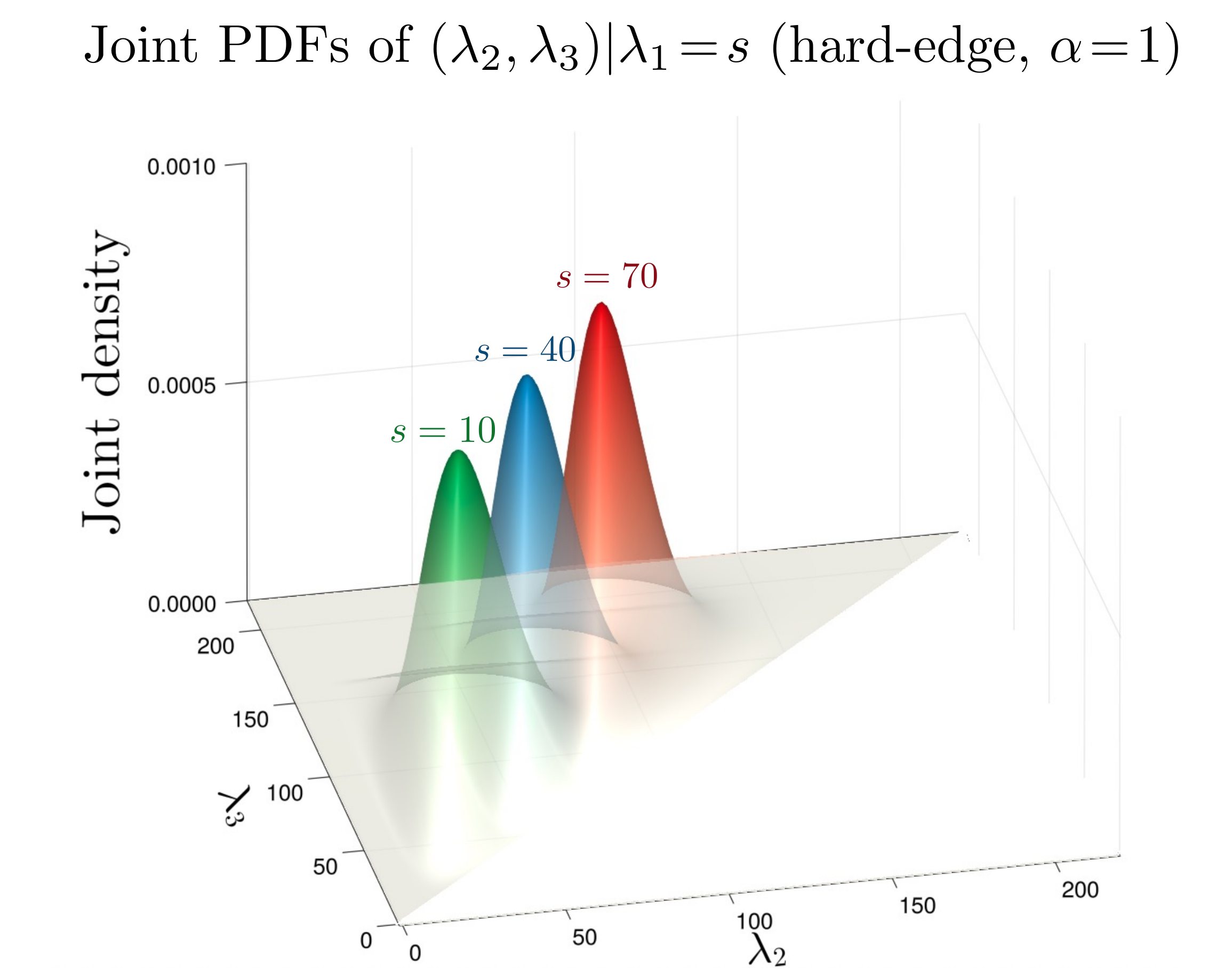} \\ 
    Section~\ref{sec:klargest}, Equation \eqref{eq:twojointpdfsoftedge} & Section~\ref{sec:klargest}\\\hline
    \includegraphics[width=0.45\textwidth]{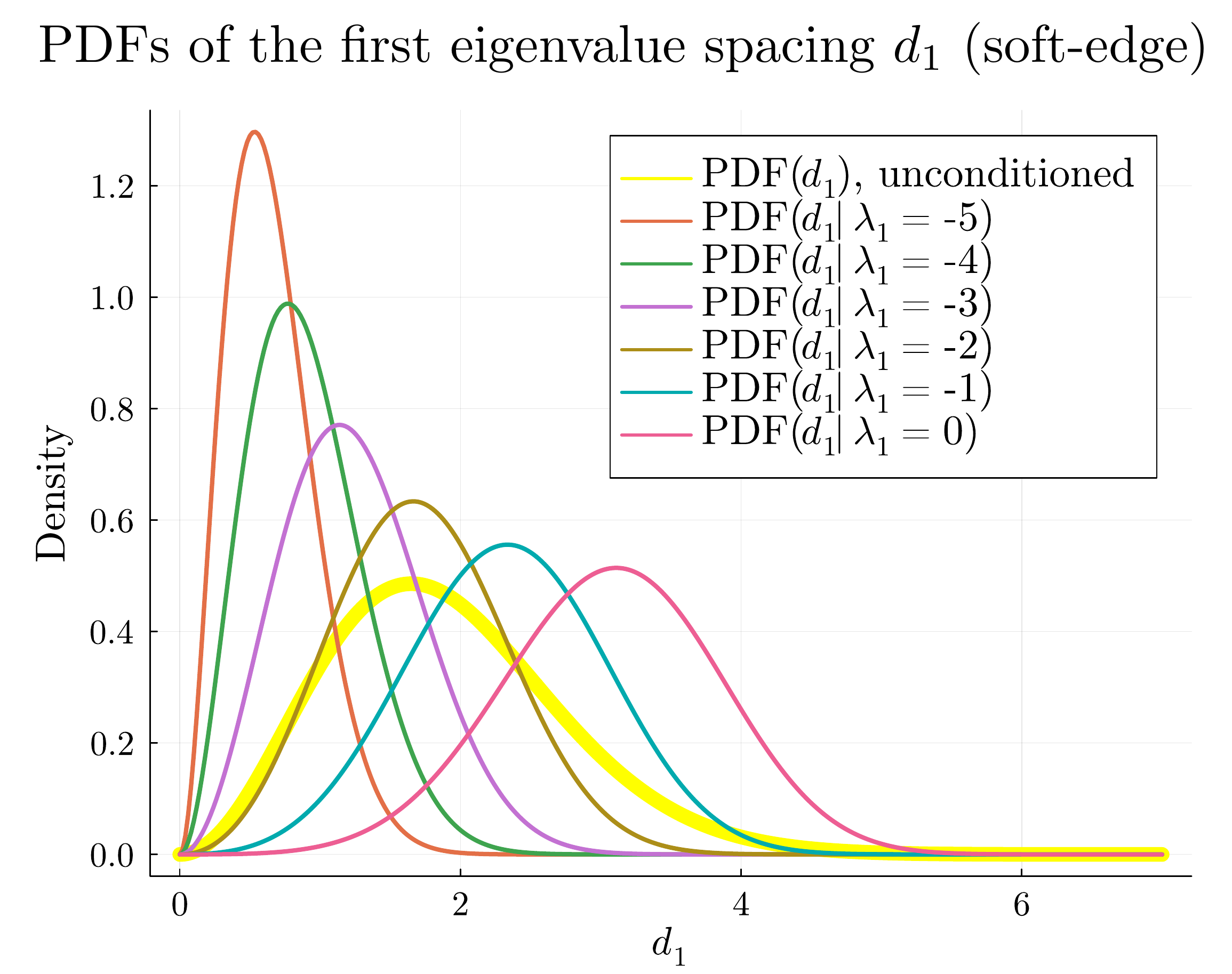} & \includegraphics[width=0.45\textwidth]{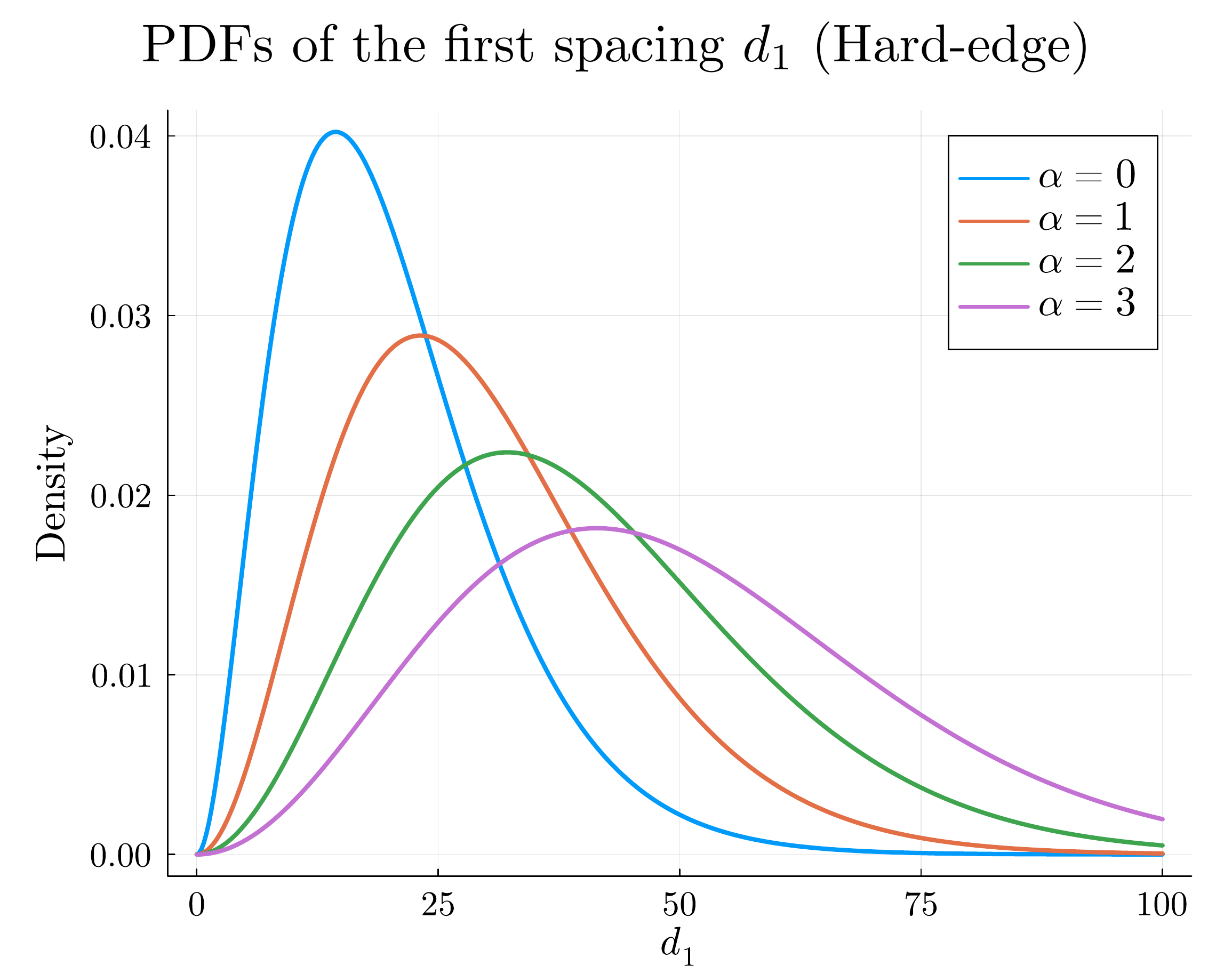} \\ 
    Section~\ref{sec:spacing} & Section~\ref{sec:spacing}, Equation \eqref{eq:spacingpdf} \\ \hline
    \includegraphics[width=0.45\textwidth]{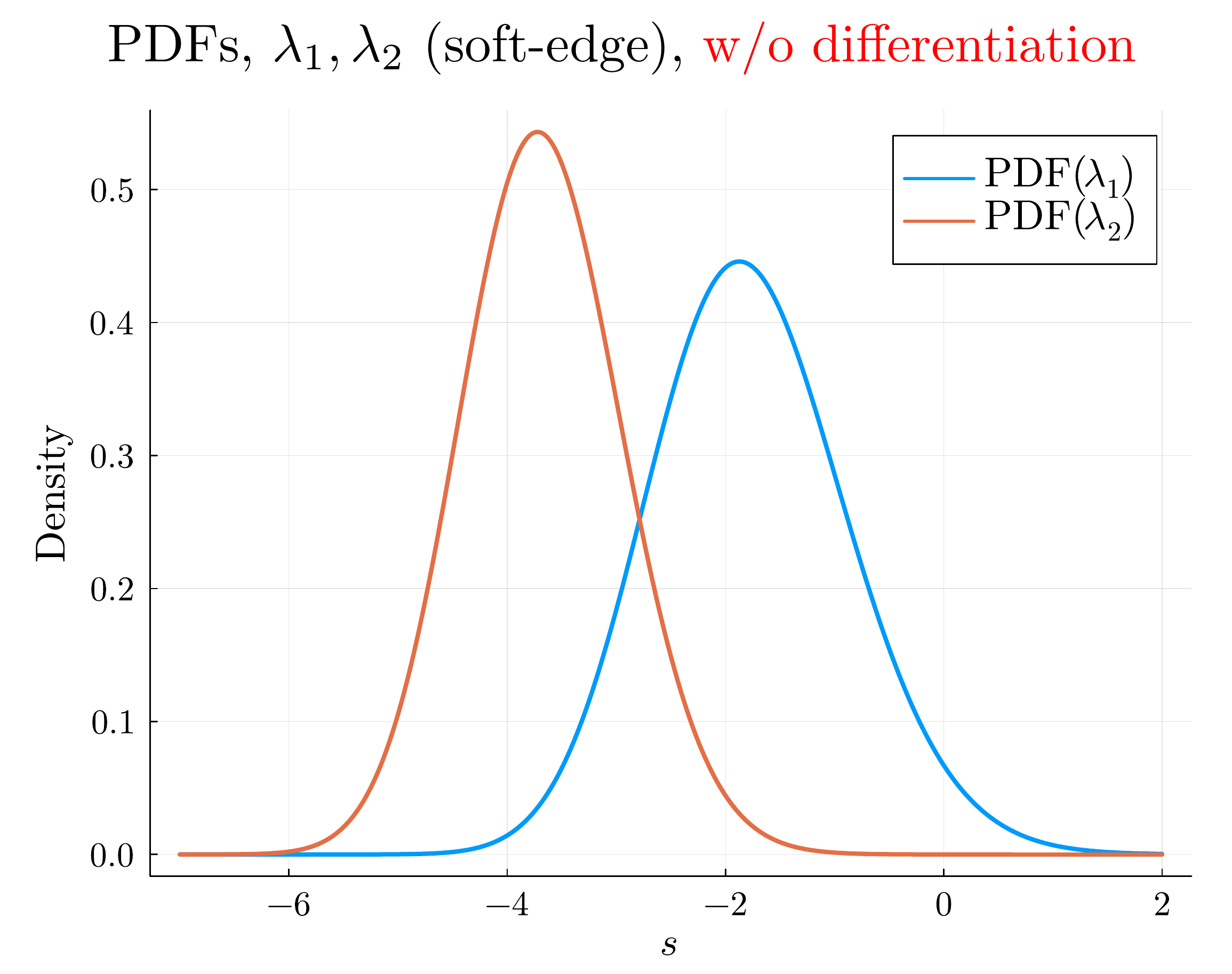} & \includegraphics[width=0.45\textwidth]{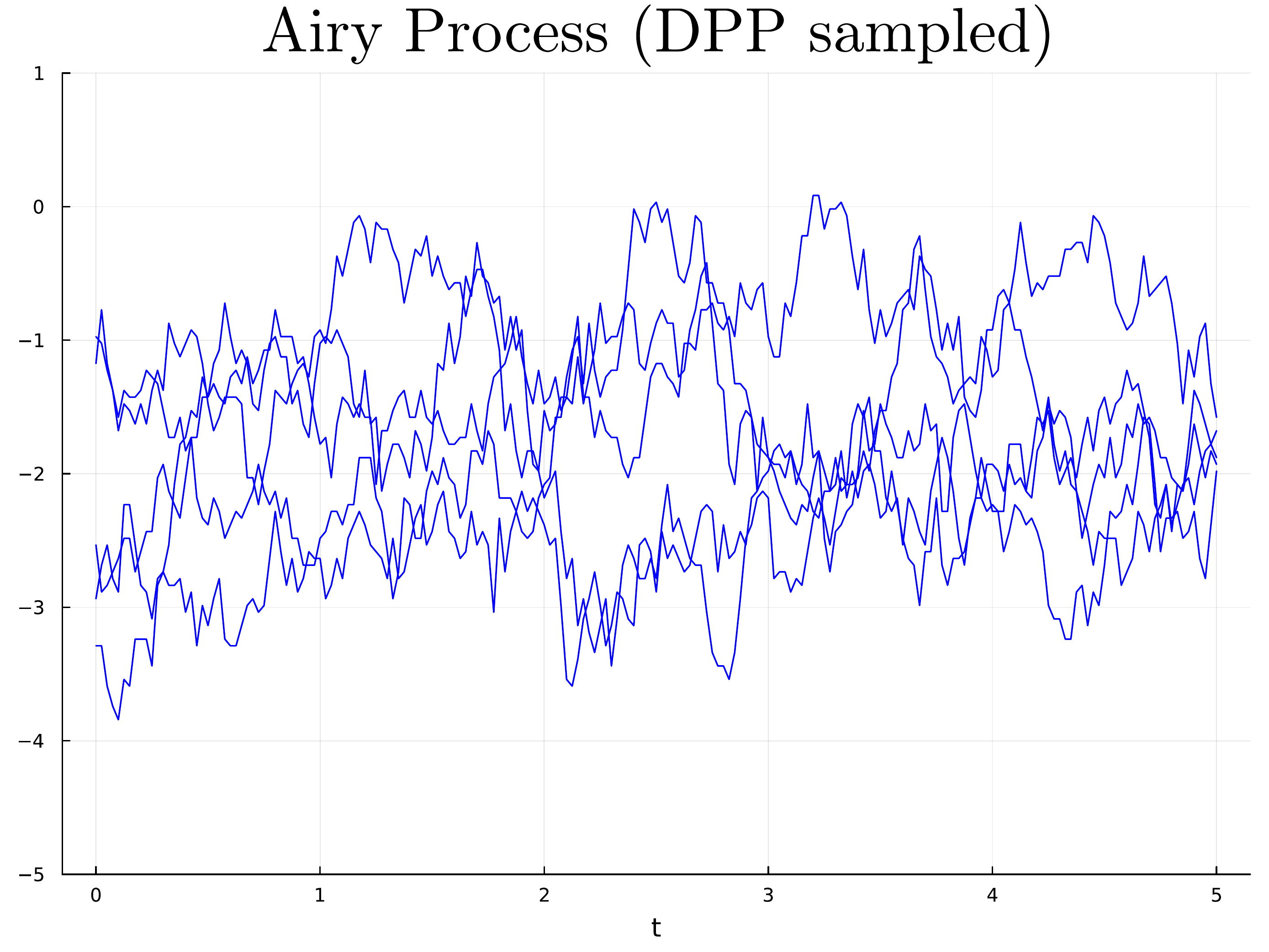}\\
    Sections~\ref{sec:extremepdf}, \ref{sec:2ndeig}, Eq \eqref{eq:tracywidompdf}, \eqref{eq:tw2ndpdf}& Section~\ref{sec:airyprocess}\\
    \hline
    \end{tabular}
    \caption{\normalsize Plots and numerical results enabled by the conditional DPP approach. Note that $\alpha$ is the parameter of the LUE. \label{fig:gallery}}
\end{figure}

%\makecell[b]{Correlation coefficient (soft-edge)\vspace{0.1cm}\\ {\large $\rho(\lambda_1, \lambda_2) = 0.50564723159...$} \vspace{0.2cm}\\ $\begin{array}{|c|}\hline \text{Computation time} \\ \hline \text{16 hours (2010, \cite{bornemann2010numerical})}\\ \downarrow \\ \text{2 minutes (Section~\ref{sec:klargest}})\\\hline \end{array}$ \\ \vspace{0.01cm} \\\vspace{0.1cm}}

%\alan[inline]{I'd start with a succint sentence as to what is the chief innovation followed gallery of picture examples with pointers to the paper Maybe take inspiration from \url{https://docs.makie.org/stable/examples/plotting_functions/} or \url{https://beautiful.makie.org/dev/} and i'm not a big fan of the word ``introduction" as it always seemed devoid of information somehow this catalog of pictures should also convey why you could do it and nobody else before you }

% List of gallery pics

% Soft-edge joint pdf
% Hard-edge joint pdf
% pdf & cdf of first eigval spacing (soft-edge)
% pdf & cdf of first eigval spacing (hard-edge)
% PDF cdf of second largest eigvales (soft-edge, hard-edge)...?

\subsection{Technical Background}
Determinantal representations frequently arise in random matrix theory, especially in the study of eigenvalues of $\beta=2$ (complex) random matrices \cite{forrester2010log,mehta2004random}. A number of random matrix $n$-point eigenvalue correlation functions \cite{dyson1962statistical} are given in terms of the following determinantal formula % Mehta 6.2.7
\begin{equation}\label{eq:npointdeterminant}
    p(x_1, \dots, x_n) = \det\left(\left[K(x_i, x_j)\right]_{i, j= 1\dots,n}\right),
\end{equation}
with some kernel $K$. A basic example is the $N\times N$ Gaussian unitary ensemble (GUE) and the Hermite kernel $K = K_\text{Herm}^{(N)}$ defined as
\begin{equation}\label{eq:HermiteKernel}
    K_\text{Herm}^{(N)}(x, y) = \sum_{i=0}^{N-1}\phi_i(x)\phi_i(y) = \sqrt\frac{N}{2}\frac{\phi_N(x)\phi_{N-1}(y)-\phi_{N-1}(x)\phi_N(y)}{x-y},
\end{equation}
where $\phi_j(x) = \exp(-\frac{x^2}{2})H_j(x)/(2^j\sqrt\pi j! )^{1/2}$ and $H_j$'s are the Hermite polynomials. 

\par Defined by such determinantal representations, a point process is called a (discrete) \textit{determinantal point process}, if for any subset $S$ of the ground index (point) set $\mathcal{G}$ a random sample $\mathcal{J}$ has the probability, 
\begin{equation}\label{eq:DPPdefinition}
    \prob{$S\subset \mathcal{J}$} = \det\left(\left[K(x, y)\right]_{x, y\in S}\right),
\end{equation}
given a matrix (or a kernel, in continuous cases) $K$. In this paper the kernel $K$ is not restricted to symmetric or Hermitian kernels (matrices).

\par In the discrete (finite) case, an exact sampling algorithm is introduced in \cite{hough2006determinantal} for DPPs with Hermitian kernels. This algorithm uses the fact that any Hermitian DPP is a mixture of projection DPPs (elementary DPPs), and also that projection DPPs have a simple exact sampling algorithm. Other sampling algorithms were also studied, for example see \cite{derezinski2019exact,kulesza2011k}, but mostly for DPPs with Hermitian kernels. 

\par However recently a new ``greedy" type algorithm was introduced \cite{launay2020exact,poulson2020high}, based on the successive computations of conditional probabilities\footnote{Such conditional approaches were also considered earlier, e.g., \cite{borodin2005eynard}.} through the block LU decomposition. In each step of this algorithm we determine whether a given index is included in the sample or not by a Bernoulli trial, which we refer to as the \textit{observation}. Depending on the observation at each step, we modify (or keep) a diagonal entry (the pivot of the LU decomposition), then perform a single step of the LU decomposition. This may be less efficient than the standard Hermitian sampler but this algorithm allows one to sample from non-Hermitian DPPs.

\par This paper is inspired by this greedy type algorithm. We make the following three important remarks:
\begin{itemize}\setlength\itemsep{0.4em}
    \item After each observation we obtain a kernel corresponding to the \textbf{new DPP of unobserved points} conditioned on the result of observed points. 
    \item Theoretically, we can \textbf{force specific points} to be included in the sample by just multiplying Bernoulli parameters, instead of random Bernoulli trials. We still obtain the above (conditional) kernel under such an event. 
    \item The observations can be done in an \textbf{arbitrary order} (of point indices).
\end{itemize}
Based on these points, we can use the conditional probability approach to derive several new determinantal expressions of various probability density functions (PDF) and cumulative distribution functions (CDF) in Section \ref{sec:mainresults}.

Not to be understated is the role of algorithms from numerical linear algebra in this work as an inspiration and algorithmic enhancement. One way or another, the kernels of the DPP may undergo a matrix factorization; A potential key to effective algorithms is the recognition of which choice to use when.
%*** Poulson, LU of M + diag( 0 or 1)
%*** Hough, Peres, symmetric eigenvalue
%*** possibly mention Householder  

\subsection{Main tool}
From the third remark above, let us imagine observing a specific point $s$ first. Then from the second remark, force an eigenvalue at (an infinitesimal interval around) $s$. Finally using the first remark, we introduce the following Proposition which is the key to our results.  
\begin{proposition}\label{prop:conditionalkernel}
Let $K$ be a kernel of an integral operator\footnote{The kernel $K : J\times J \to \mathbb{C}$ is the kernel of some integral operator $\tilde{K}$ on $L^2(J)$ as follows:
\begin{equation*}
    \tilde{K}f(x) = \int K(x,y)f(y)dy.
\end{equation*}
However we will simply denote by $K$ both the kernel and the integral operator since there is no confusion throughout this work.} on $J$ that defines a continuous DPP of the $n$-point correlation function \eqref{eq:npointdeterminant} of some random matrix eigenvalues. Fix a point $s\in J$ and define a derived kernel
\begin{equation}\label{eq:newkernel}
    K^{(s)}(x, y) := K(x, y) - \frac{K(x, s)K(s, y)}{K(s, s)}.
\end{equation}
Then, the $n$-point correlation function $p^{(s)}(x_1, \dots, x_n)$ of the rest of eigenvalues given that an eigenvalue already exists in an infinitesimal interval around $s$ is
\begin{equation*}
    p^{(s)}(x_1,\dots,x_n) = \det\left(\left[K^{(s)}(x_i, x_j)\right]_{i, j=1, \dots, n}\right).
\end{equation*}
In other words, the kernel $K^{(s)}$ defines a continuous DPP \eqref{eq:npointdeterminant} of the eigenvalues conditioned on the event of an eigenvalue existing at $s$. 
\end{proposition}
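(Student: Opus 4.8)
The plan is to express the conditional $n$-point correlation function as a Palm-type ratio of correlation functions of the original DPP, and then recognize the resulting $(n+1)\times(n+1)$ determinant as $K(s,s)$ times a Schur complement. Write $\rho_m$ for the $m$-point correlation function of the process defined by $K$, so that $\rho_m(z_1,\dots,z_m)=\det\left(\left[K(z_i,z_j)\right]_{i,j=1}^m\right)$ by \eqref{eq:npointdeterminant}. First I would invoke the standard interpretation of correlation functions: for pairwise disjoint infinitesimal intervals, the probability of finding an eigenvalue in $[s,s+ds]$ and simultaneously one in each $[x_i,x_i+dx_i]$ is $\rho_{n+1}(x_1,\dots,x_n,s)\,dx_1\cdots dx_n\,ds$, while the probability of finding an eigenvalue in $[s,s+ds]$ alone is $\rho_1(s)\,ds=K(s,s)\,ds$. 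Taking the ratio (the limit as $ds\to0$) gives the conditional $n$-point correlation function
\begin{equation*}
    p^{(s)}(x_1,\dots,x_n)=\frac{\rho_{n+1}(x_1,\dots,x_n,s)}{\rho_1(s)}=\frac{1}{K(s,s)}\det\left(\left[K(y_i,y_j)\right]_{i,j=1}^{n+1}\right),
\end{equation*}
where $(y_1,\dots,y_{n+1})=(x_1,\dots,x_n,s)$; this is precisely the reduced Palm measure of the DPP at the point $s$.

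It then remains to establish the purely algebraic identity
\begin{equation*}
    \det\left(\left[K(y_i,y_j)\right]_{i,j=1}^{n+1}\right)=K(s,s)\cdot\det\left(\left[K^{(s)}(x_i,x_j)\right]_{i,j=1}^{n}\right).
\end{equation*}
I would write the $(n+1)\times(n+1)$ matrix in block form with the last row and column indexed by $s$, namely $\stwotwo{A}{b}{c^{\mathsf T}}{K(s,s)}$ with $A=[K(x_i,x_j)]_{i,j=1}^n$ and column vectors $b=[K(x_i,s)]_{i=1}^n$, $c=[K(s,x_j)]_{j=1}^n$, and then apply the block-determinant (Schur complement) formula $\det\stwotwo{A}{b}{c^{\mathsf T}}{d}=d\cdot\det(A-d^{-1}bc^{\mathsf T})$, valid whenever $d\neq0$. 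Since $(bc^{\mathsf T})_{ij}=K(x_i,s)K(s,x_j)$, the $(i,j)$ entry of $A-K(s,s)^{-1}bc^{\mathsf T}$ is exactly $K(x_i,x_j)-K(x_i,s)K(s,x_j)/K(s,s)=K^{(s)}(x_i,x_j)$ by \eqref{eq:newkernel}, which proves the identity and hence the claimed determinantal formula for $p^{(s)}$. This step uses nothing about symmetry of $K$, consistent with the paper allowing non-Hermitian kernels, and requires only $K(s,s)\neq0$, which is needed anyway for the conditioning to make sense.

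The Schur complement computation and the block bookkeeping are routine. The only genuinely delicate point --- and thus the main obstacle --- is making the first paragraph rigorous: justifying that ``conditioning on an eigenvalue in an infinitesimal interval around $s$'' is correctly captured by the reduced Palm measure, i.e. that the ratio of the two probabilities above converges as $ds\to0$ to $\rho_{n+1}(\cdot,s)/\rho_1(s)$. This needs continuity of the relevant correlation functions at $s$ --- which holds for the analytic kernels arising from random matrix eigenvalues --- and $\rho_1(s)=K(s,s)>0$ so that the conditioning event has positive probability. One can either cite the standard theory of Palm measures for simple point processes, where the fact that Palm measures of DPPs are again DPPs with the Schur-complement kernel is classical, or simply take the above ratio as the definition of the conditioned process and then verify, via the block-determinant identity, that it is a genuine DPP.
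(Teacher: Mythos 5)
Your proposal is correct and follows essentially the same route as the paper: the paper's ``one step of LU decomposition'' of the $(n+1)$-point determinant is exactly your Schur complement identity $\det[K(y_i,y_j)]_{i,j=1}^{n+1}=K(s,s)\det[K^{(s)}(x_i,x_j)]_{i,j=1}^{n}$, and the paper likewise interprets the conditional correlation function as the limiting ratio of infinitesimal-interval probabilities (your Palm-type ratio) via \eqref{eq:npointcordef}. Your remarks on non-Hermitian kernels and the requirement $K(s,s)\neq 0$ are consistent with, though not spelled out in, the paper's argument.
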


\par The concept of the conditional DPP in Proposition~\ref{prop:conditionalkernel} can be found (in terms of $L$-ensemble) and justified as a DPP in Borodin and Rains \cite{borodin2005eynard}, where it is proven to be useful when proving the Eynard-Mehta theorem. It has also been discussed in the context of machine learning \cite{kulesza2011k,kulesza2011learning}. 
%\alan{you leave me curious, i kind of want to know to what extent the concept is there , is used for computation, or just hinted at}

\par One might notice that the kernel \eqref{eq:newkernel} is in fact the result of a single step of the LU decomposition with the pivot $K(s,s)$, as discussed in the second remark. Note that we can condition on the selection of more than one points (Proposition~\ref{prop:conditionalkernelmpoints}). We emphasize that this kernel is easy-to-use since it is explicit and does not include any infinite summation or differentiation. 

\par Proposition~\ref{prop:conditionalkernel} leads to some new expressions on eigenvalue statistics of random matrices. In Section~\ref{sec:mainresults} we provide several eigenvalue statistics in terms of Fredholm determinants, which are known to be amenable to numerical computation through the method proposed in \cite{bornemann2010numerical,bornemann2009numerical}. These results include but are not limited to:
\begin{itemize}\setlength\itemsep{0.4em}
    \item PDF, CDF of the two extreme eigenvalues (Sections~\ref{sec:extremepdf}, \ref{sec:2ndeig})
    \item Joint PDF, CDF of the $k$ extreme eigenvalues (Section \ref{sec:klargest})
    \item PDF, CDF of the first eigenvalue spacing (Section~\ref{sec:klargest})
%   \item PDF and CDF of the $k^\text{th}$ largest eigenvalue conditioned on the locations of $k-1$ largest eigenvalues,
\end{itemize}
In these results, a random matrix can be chosen to be any random matrix with a determinantal $n$-point correlation function \eqref{eq:npointdeterminant}, such as the GUE, LUE, JUE, soft-edge scaling, hard-edge scaling, etc.

%\begin{figure}[h]
%\begin{picture}(4in, 2.2in)
%\put(0,0){\includegraphics[width=4in]{preview.pdf}}
%\put(3.4in, 0.65in){\color{red}Section~\ref{sec:klargest}}
%%\end{picture}
%\caption{Illustration of the results in this paper}
%\end{figure}

\subsection{Preview \#1: Joint PDF of the two largest eigenvalues}
\par A good example of our approach is the joint PDF $f^{(\lambda_1, \lambda_2)}$ of the $k=2$ largest eigenvalues\footnote{The choice of $k=2$ is arbitrary and for illustrative purpose. We can obtain a joint PDF of any $k$ largest eigenvalues which we discuss in Section~\ref{sec:klargest}.} $\lambda_1\geq \lambda_2$ of a random matrix. Let us use the soft-edge scaling limit of the GUE as an example. It is expressible in terms of a Fredholm determinant\footnote{Throughout this paper, we simplify notation by denoting the restriction $K\rt_{L^2(J)}$ of the operator $K$ to square integrable functions $L^2(J)$ by $K\rt_J$.} using Proposition~\ref{prop:conditionalkernel},
\begin{equation}\label{eq:twojointpdfsoftedge}
    f^{(\lambda_1, \lambda_2)}(x_1, x_2) = \det\left(\twotwo{K(x_1, x_1)}{K(x_1, x_2)}{K(x_2, x_1)}{K(x_2, x_2)}\right)\cdot \det(I - K^{(x_1,x_2)}\rt_{(x_2, \infty)}),
\end{equation}
for $x_1>x_2$ and vanishes otherwise, where $K=\Kai$ is the Airy kernel \eqref{eq:airykernel} and the kernel $K^{(x_1, x_2)}$ is defined in terms of $K$,
\begin{equation*}
    K^{(x_1, x_2)}(x, y) := K(x, y) - \twoone{K(x, x_1)}{K(x, x_2)}^T\!\!\twotwo{K(x_1, x_1)}{K(x_1, x_2)}{K(x_2, x_1)}{K(x_2, x_2)}^{-1}\!\twoone{K(x_1, y)}{K(x_2, y)}. 
\end{equation*}
Using the above formula \eqref{eq:twojointpdfsoftedge} we were able to compute the correlation coefficient of the two largest eigenvalues at the soft-edge scaling limit
\begin{equation*}
    \rho(\lambda_1, \lambda_2) = 0.505\,647\,231\,59...,
\end{equation*}
up to 11 digits in less than 2 minutes. This correlation coefficient has a previously reported computing time of 16 hours for 11 digits in 2010 \cite{bornemann2010numerical}. The formula can also be generalized to the $k$ largest or, similarly, smallest eigenvalues of any random matrices when the $n$-point correlation function of eigenvalues is given in determinantal manner \eqref{eq:npointdeterminant}. See Section \ref{sec:klargest} for details. 

%\footnote{In most cases the Tracy--Widom distribution automatically refers to the CDF of the largest eigenvalue of the soft-edge scaling limit. The PDF is theoretically just a derivative of $F_2$, which in practice we find our expression without differentiating efficient. See Section~\ref{sec:extremepdf} for details.}
%Another preview of the determinantal expressions obtained in Section \ref{sec:mainresults} is the PDF of the largest eigenvalue. We again take an example of the soft-edge, the Tracy--Widom distribution

\subsection{Preview \#2: Determinantal expression for the PDF of the Tracy--Widom distribution}
\par The famous Tracy--Widom distribution PDF is often plotted. Interestingly, as far as we know\footnote{Except for a recent approach suggested in \cite{bornemann2022stirling}. See Section~\ref{sec:extremepdf} for details.}, the last step of the computation involves taking the derivative of the CDF. In this preview we give a direct determinantal expression of the Tracy--Widom PDF that gives the plot in the bottom left part of Figure~\ref{fig:gallery}. The CDF expression that is typically used is $F_2(s) = \det(I-\Kai\rt_{(s, \infty)})$. Our idea is that we first fix a level at $s$ and then compute the probability that nothing lies above $s$ with the conditional DPP kernel $\Kai^{(s)}$. 
\begin{proposition}[PDF of the Tracy--Widom distribution]\label{prop:twpdf}
The probability density function $f_2$ of the largest eigenvalue at the soft-edge is
\begin{equation}\label{eq:tracywidompdf}
    \frac{d}{ds}F_2(s) = f_2(s) = \Kai(s, s)\cdot \det\left(I - \Kai^{(s)}\rt_{(s, \infty)}\right),
\end{equation}
where $F_2$ is the Tracy--Widom distribution (CDF), $\Kai$ is the Airy kernel and 
\begin{equation*}
    \Kai^{(s)}(x, y) = \Kai(x, y)- \frac{(\Ai(x)\Ai'(s)-\Ai(s)\Ai'(x))(\Ai(s)\Ai'(y)-\Ai(y)\Ai'(s))}{(x-s)(y-s)(s\Ai(s)^2-\Ai'(s)^2)}
\end{equation*}
is the derived kernel of the conditional DPP as proposed in \eqref{eq:newkernel}.
\end{proposition}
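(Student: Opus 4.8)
The plan is to obtain $f_2$ as the $s$-derivative of the gap probability $F_2(s)=\det(I-\Kai\rt_{(s,\infty)})=\mathbb{P}(\text{no soft-edge eigenvalue lies in }(s,\infty))$ and to read the derivative off a probabilistic decomposition of the increment, which is exactly where Proposition~\ref{prop:conditionalkernel} enters. For $\epsilon>0$, splitting $(s,\infty)=(s,s+\epsilon]\cup(s+\epsilon,\infty)$ gives
\begin{equation*}
 F_2(s+\epsilon)-F_2(s)=\mathbb{P}\bigl(\text{no eigenvalue in }(s+\epsilon,\infty)\text{ and at least one eigenvalue in }(s,s+\epsilon]\bigr).
\end{equation*}
I would then decompose on the location $\sigma$ of an eigenvalue in $(s,s+\epsilon]$ (there being two or more only with probability $O(\epsilon^2)$): the density of finding one at $\sigma$ is the one-point function $\Kai(\sigma,\sigma)$, and, conditionally on that event, the remaining eigenvalues form the continuous DPP with kernel $\Kai^{(\sigma)}$ by Proposition~\ref{prop:conditionalkernel}, so the conditional probability of no eigenvalue in $(s+\epsilon,\infty)$ is the gap probability $\det(I-\Kai^{(\sigma)}\rt_{(s+\epsilon,\infty)})$ of that DPP. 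This gives
\begin{equation*}
 F_2(s+\epsilon)-F_2(s)=\int_s^{s+\epsilon}\Kai(\sigma,\sigma)\,\det\bigl(I-\Kai^{(\sigma)}\rt_{(s+\epsilon,\infty)}\bigr)\,d\sigma+O(\epsilon^2),
\end{equation*}
and dividing by $\epsilon$ and letting $\epsilon\to0^+$ produces $f_2(s)=\Kai(s,s)\,\det(I-\Kai^{(s)}\rt_{(s,\infty)})$.

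The main obstacle is turning this into a rigorous argument; it is bookkeeping rather than a new idea. One needs (i) that $F_2$ is differentiable, which is classical; (ii) joint continuity at $(s,s)$ of $\sigma\mapsto\Kai(\sigma,\sigma)$ and $(\sigma,t)\mapsto\det(I-\Kai^{(\sigma)}\rt_{(t,\infty)})$, for which I would record a short lemma that $\sigma\mapsto\Kai^{(\sigma)}\rt_{(t,\infty)}$ is continuous in trace norm (the denominator $\Kai(\sigma,\sigma)$ is bounded away from $0$ locally) and then invoke continuity of Fredholm determinants in trace norm; and (iii) uniformity of the $O(\epsilon^2)$ term, which is the inclusion--exclusion tail for two-or-more points in $(s,s+\epsilon]$. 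All of (i)--(iii) follow from the smoothness of $\Kai$ and the super-exponential decay of $\Ai,\Ai'$ on $(0,\infty)$, which control the relevant trace norms and tails.

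Separately one must check that \eqref{eq:newkernel} with $K=\Kai$ is the explicit kernel displayed in the statement. From the closed form of the Airy kernel one reads $\Ai(x)\Ai'(s)-\Ai(s)\Ai'(x)=(x-s)\Kai(x,s)$ and $\Ai(s)\Ai'(y)-\Ai(y)\Ai'(s)=(s-y)\Kai(s,y)$, while l'H\^opital's rule together with the Airy equation $\Ai''(s)=s\Ai(s)$ gives $\Kai(s,s)=\Ai'(s)^2-s\Ai(s)^2$. Substituting into $\Kai(x,s)\Kai(s,y)/\Kai(s,s)$ and noting that the sign flips $(x-s)(s-y)=-(x-s)(y-s)$ and $\Ai'(s)^2-s\Ai(s)^2=-(s\Ai(s)^2-\Ai'(s)^2)$ cancel reproduces the subtracted term; this is a routine simplification.

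Finally, I would include a purely analytic cross-check that needs no probability. Since $\Kai^{(s)}\rt_{(s,\infty)}$ differs from $\Kai\rt_{(s,\infty)}$ by the rank-one operator with kernel $\Kai(s,s)^{-1}\Kai(x,s)\Kai(s,y)$, the matrix determinant lemma gives
\begin{equation*}
 \Kai(s,s)\,\det\bigl(I-\Kai^{(s)}\rt_{(s,\infty)}\bigr)=\det\bigl(I-\Kai\rt_{(s,\infty)}\bigr)\Bigl(\Kai(s,s)+\bigl\langle\Kai(s,\cdot),\,(I-\Kai\rt_{(s,\infty)})^{-1}\Kai(\cdot,s)\bigr\rangle\Bigr),
\end{equation*}
and the bracket is exactly $R(s,s)$, the left-endpoint diagonal value of the resolvent kernel $R=(I-\Kai\rt_{(s,\infty)})^{-1}\Kai\rt_{(s,\infty)}$; by the classical Tracy--Widom identity $\frac{d}{ds}\log\det(I-\Kai\rt_{(s,\infty)})=R(s,s)$ the right-hand side equals $F_2'(s)=f_2(s)$. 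Either route completes the proof.
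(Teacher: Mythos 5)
Your main argument is essentially the paper's own: the paper establishes \eqref{eq:tracywidompdf} as the special case $J=(s,\infty)$ of Corollary~\ref{cor:derivativekernel}, whose proof is exactly your increment decomposition --- write $F_2(s+\epsilon)-F_2(s)$ as the probability of an eigenvalue in $(s,s+\epsilon]$ with none above, condition on that eigenvalue via Proposition~\ref{prop:conditionalkernel} to get $\det(I-\Kai^{(\sigma)}\rt_{(s+\epsilon,\infty)})$, and let $\epsilon\to0$ so the one-point factor tends to $\Kai(s,s)$. You are somewhat more careful than the paper about the bookkeeping (trace-norm continuity of $\sigma\mapsto\Kai^{(\sigma)}$, differentiability of $F_2$, uniform control of the two-or-more-points event), and you also verify the explicit form of $\Kai^{(s)}$ from $\Kai(s,s)=\Ai'(s)^2-s\Ai(s)^2$, which the paper leaves implicit; both checks are correct. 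What is genuinely different is your closing cross-check: since $\Kai^{(s)}\rt_{(s,\infty)}$ is a rank-one perturbation of $\Kai\rt_{(s,\infty)}$, the matrix determinant lemma gives $\Kai(s,s)\det(I-\Kai^{(s)}\rt_{(s,\infty)})=F_2(s)\,R(s,s)$ with $R=(I-\Kai\rt_{(s,\infty)})^{-1}\Kai\rt_{(s,\infty)}$, and the classical identity $\frac{d}{ds}\log F_2(s)=R(s,s)$ then yields \eqref{eq:tracywidompdf} without any probabilistic limiting argument. The paper does not contain this route; it is a nice purely analytic proof (modulo quoting the resolvent identity), and it makes explicit the equivalence between \eqref{eq:tracywidompdf} and log-derivative formulas such as Eq.\ (37b) of \cite{bornemann2022stirling}, which the paper only compares numerically in Section~\ref{sec:extremepdf}.
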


This again is not limited to the soft-edge scaling limit, but also applicable to other random matrices such as the finite GUE, LUE, hard-edge and more. See Section \ref{sec:extremepdf} for further details. We also provide numerical experiments.

\subsection{Outline of the paper}
\par In Section \ref{sec:DPPtheory}, we review the theory of discrete and continuous DPPs and their sampling algorithms. We propose a hybrid sampling method, Algorithm~\ref{alg:HybridDPP}, for DPPs with non-Hermitian projection kernels, for example the DPP of the Aztec diamond \cite{johansson2005arctic}. We then demonstrate an efficient sampling of the DR paths (see Section~\ref{sec:drpath}) without sampling the whole Aztec diamond. 

\par Random matrix applications are discussed in Section~\ref{sec:mainresults}. In Section~\ref{sec:detexpressions} we review some basic random matrix eigenvalue statistics and the conditional DPP approach. Then we derive several new determinantal representations which are efficiently implemented for numerical computations in later sections. In Section~\ref{sec:extremepdf} we obtain a Fredholm determinant expression for the PDF of extreme eigenvalues, such as the Tracy--Widom distribution. In Section~\ref{sec:2ndeig} we specialize on the second largest eigenvalue and provide new formulae for distribution functions. Section~\ref{sec:klargest} discusses the joint PDF of the $k$ largest (extreme) eigenvalues. Applications of these joint PDFs include the first eigenvalue spacing, correlation coefficient of the two largest eigenvalues, and many more. Finally in Section~\ref{sec:airyprocess} we demonstrate the sampling of Airy processes from the DPP. Throughout Section~\ref{sec:mainresults} we provide extensive numerical experiments, All codes may be found \href{https://github.com/sw2030/RMTexperiments}{online}.

\section{Determinantal point processes}\label{sec:DPPtheory}

\subsection{Discrete and Projection DPPs}
Discrete DPPs have a fairly straightforward definition as we saw from \eqref{eq:DPPdefinition}. In particular if we have a finite sized ground set $G$, the kernel is a (finite) $|G|\times |G|$ matrix $K$, often called the \textit{marginal kernel}. For a given DPP $\mathcal{J}$ the following identities are important:
\begin{gather}\label{eq:traceK}
    \tr(K) = \mathbb{E}(|\mathcal{J}|),\\\label{eq:detImK}
    \det(I-K) = \prob{$|\mathcal{J}|=0$}.
\end{gather}

\par It is known that if the kernel matrix $K$ is a \textit{projection matrix}, a DPP has the following special property: \textit{a DPP with a rank $r$ projection marginal kernel draws a sample with the size exactly $r$, i.e. $|J|=r$ (almost surely when continuous).}

\par This property is a cornerstone of the sampling algorithm introduced in \cite{hough2006determinantal}. Imagine an algorithm that draws samples from a given DPP, where sample points are selected in an unsorted (uniformly permuted) order. The probability $P_i$ that a given index $i\in G$ is picked `first' is the following.
\begin{equation*}
    P_i=\prob{$\mathcal{J}=\{i\}$} + \sum_{\substack{i\in S\\ |S|=2}}\frac{1}{2}\prob{$\mathcal{J}=S$}+\sum_{\substack{i\in S\\ |S|=3}}\frac{1}{3}\prob{$\mathcal{J}=S$}+\cdots = \sum_{i\in S}\frac{1}{|S|}\prob{$\mathcal{J}=S$}.
\end{equation*}
Thinking the other way around, if we know $\{P_i\}$, we can  sample the first point (without worrying about additional sample points) according to the discrete random variable $X$ defined by $\prob{$X=i$} = P_i$. However the probabilities $\{P_i\}$ in general do not have a simpler expression in terms of the entries of the marginal kernel. 

\par Nonetheless, for projection DPPs, (normalized) diagonal entries of $K$ equals the probabilities $\{P_i\}$ as $P_i = K_{ii}/r$ is deduced from \eqref{eq:traceK} together with $\tr(K)=r$. Thus, sampling from a projection DPP begins by drawing a single index point from a categorical random variable with normalized diagonal entries as its distribution. After drawing a first point, one can modify the kernel so one can sample points iteratively as we describe in the following paragraphs.

\begin{algorithm}[h]
\caption{\texttt{OrthoProjDPP}: Sample from an orthogonal projection DPP}\label{alg:OrthoprojDPP}
\begin{algorithmic}
    \Function{\texttt{OrthoProjDPP}}{Y}\hspace{0.2cm}$\,\,\,\%\,\, Y\in\mathbb{R}^{n\times r}$ is orthogonal and $K = YY^T$
    \State sample $\gets$ empty vector
    \For {$i=1:r$}\\
    \hspace{1cm}Draw $j$ from $\{1,\dots,n\}$ with $\prob{$X=j$} = \verb|norm|(Y[j,:])^2$ \\
    \hspace{1cm}Add $j$ to sample \\
    \hspace{1cm}$Q\gets \verb|Householder|(Y[j,:])$\\
    \hspace{1cm}$Y \gets (YQ)[:,2:\text{end}]$
    \EndFor\\
    \hspace{0.5cm}\Return sample
    \EndFunction
\end{algorithmic}
\end{algorithm}

\par Let us for a moment restrict our projection matrix $K\in\mathbb{R}^{n\times n}$ to be an orthogonal projection matrix, so that we have $K = YY^T$ for some orthogonal (unitary, if complex) matrix $Y\in\mathbb{R}^{n\times r}$. The probability $P_i$ above is then equivalent to the squared norm of the $i^\text{th}$ row of $Y$, $\sum_{j=1}^r Y_{ij}^2$, divided by $r$. We multiply a Householder reflector $Q$ \cite{edelman18338} of the $i^\text{th}$ row of $Y$ on the right side of $Y$, so that $YQ$ has the $i^\text{th}$ row $(P_i, 0, \dots, 0)$. If we let $\tilde{Y} = YQ$ we have
\begin{equation*}
    \prob{$j\in\mathcal{J}|i\in\mathcal{J}$} = \sum_{k=2}^r \tilde{Y}_{jk}^2,
\end{equation*}
which means that the matrix $Z\in\mathbb{R}^{n\times (r-1)}$ obtained by deleting the first column of $\tilde{Y}$ (which is again orthogonal) plays the same role as $Y$ when drawing the first index point $i$. In other words, $ZZ^T$ is the rank $r-1$ marginal kernel of the DPP conditioned on the first sample index point $i$. Recursively applying this procedure $r$ times, we obtain Algorithm~\ref{alg:OrthoprojDPP} for orthogonal projection DPPs.

\par The algorithm introduced by Hough et al. in \cite{hough2006determinantal} samples from a Hermitian DPP using the fact that it is a mixture of projection DPPs via the eigendecomposition of $K$. Algorithm~\ref{alg:SymDPP} outlines this sampling algorithm. 

\begin{algorithm}[h]
\caption{\texttt{HermDPP}: Sample from a Hermitian DPP}\label{alg:SymDPP}
\begin{algorithmic}
    \Function{\texttt{HermDPP}}{$X, \Lambda$} \hspace{0.2cm}$\%\,\,$Eigendecomposition $K = X\Lambda X^T$, $K\in\mathbb{R}^{n\times n}$
    \State mask $\gets$ empty vector
    \For {$i=1:n$}
    \If{Bernoulli($\Lambda[i]$)==1}
    \State Add $i$ to mask 
    \EndIf 
    \EndFor
    \State $Y\gets X[:, \text{mask}]$ \\
    \hspace{0.5cm}\Return \verb|OrthoProjDPP(|$YY^T$\verb|)|
    \EndFunction
\end{algorithmic}
\end{algorithm}

\subsection{Sampling with conditional probabilites}\label{sec:condapproach}

\par Algorithm \ref{alg:SymDPP} and the subsequent select a single sample point of $\mathcal{J}$ at each time step. Thus, the set of points that are `not selected' is determined at the final step of the algorithm. On the other hand, some recent work \cite{launay2020exact,poulson2020high} uses a different approach based on conditional probabilities. At each step, rather than sampling an index point, we `observe' a single index point and determine whether it is going to be drawn or not. 

\par A central idea comes from the following block LU decomposition where the rows and columns are partitioned by $(m, n-m)$, 
\begin{equation*}
    K = \twotwo{I_m}{0}{K_{21}K_{11}^{-1}}{K_{22}-K_{21}K_{11}^{-1}K_{12}}\twotwo{K_{11}}{K_{12}}{0}{I_{n-m}}.
\end{equation*}
This is equivalent to the result of $m$ steps of the (unpivoted) LU decomposition. We have the following conditional probability for any subset $S$ of $\{m+1, \dots, n\}$,
\begin{equation*}
    \det\left(\left[(K_{22}-K_{21}K_{11}^{-1}K_{12})_{i,j}\right]_{i,j\in S}\right) = \prob{$S\subset\mathcal{J}$ $|$ $\{1, \dots, m\}\subset \mathcal{J}$}.
\end{equation*}
In other words, the matrix $K_{22}-K_{21}K_{11}^{-1}K_{12}$ (i.e., the Schur complement) serves as a new kernel for the DPP conditioned on all of $1,\dots,m$ already being drawn. Recall from the remark in the introduction that indices $1, \dots, m$ can be in fact any $m$ indices, since the order of observation can be arbitrarily chosen by some row/column permutation. Furthermore, a DPP with the condition that some index points are not selected can also be analyzed in a similar manner. The following Proposition summarizes this idea.
\begin{proposition}[\cite{poulson2020high}]\label{prop:poulsondiscrete}
Let $K$ be the kernel of the DPP $\mathcal{J}$. Given disjoint subsets $A, B$ of the ground set $\{1, \dots, n\}$, we have the following probabilities.
\begin{align*}
    \prob{$B\subset \mathcal{J}$ $|$ $A\subset\mathcal{J}$} &= \det(K_{B,B} - K_{B, A}K_{A,A}^{-1} K_{A, B}), \\
    \prob{$B\subset \mathcal{J}$ $|$ $A\subset \mathcal{J}^c$} &= \det(K_{B,B} - K_{B, A}(K_{A,A}-I)^{-1} K_{A, B}),
\end{align*}
where $K_{X,Y}$ is the submatrix of $K$ with row indices $X$ and column indices $Y$. 
\end{proposition}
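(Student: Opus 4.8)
The plan is to prove Proposition~\ref{prop:poulsondiscrete} directly from the definition \eqref{eq:DPPdefinition} of a DPP together with the block LU factorization already displayed before the statement, reducing everything to the identity $\prob{$B\subset\mathcal{J}\mid A\subset\mathcal{J}$} = \prob{$A\cup B\subset\mathcal{J}$}/\prob{$A\subset\mathcal{J}$}$ and an analogous one for the complementary condition. First I would treat the inclusion case. Choose a permutation of the ground set so that the rows/columns indexed by $A$ come first, those indexed by $B$ second, and the rest last; this is legitimate by the arbitrary-order remark in the introduction, and it does not change any principal minor. Writing $K_{A,A}=K_{11}$ etc., the block LU decomposition displayed above gives, for any subset $S\subseteq A\cup B$ containing $A$, say $S = A\cup T$ with $T\subseteq B$,
\[
\det K_{S,S} = \det K_{A,A}\cdot \det\bigl[(K_{B,B}-K_{B,A}K_{A,A}^{-1}K_{A,B})_{i,j}\bigr]_{i,j\in T},
\]
because the LU factors are block triangular with identity diagonal blocks in the appropriate places, so the determinant of the $S\times S$ principal submatrix factors as the product of $\det K_{A,A}$ and the $T\times T$ principal minor of the Schur complement. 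Taking $T=B$ and dividing by $\det K_{A,A}=\prob{$A\subset\mathcal{J}$}$ yields the first formula.

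For the second formula, the standard device is to pass to the complementary kernel. It is a known fact (and can be recorded as a one-line lemma, or cited) that if $\mathcal{J}$ has marginal kernel $K$, then its complement $\mathcal{J}^c$ is a DPP with marginal kernel $I-K$; equivalently, for any $A$, $\prob{$A\subset\mathcal{J}^c$} = \prob{$A\cap\mathcal{J}=\emptyset$} = \det(I-K)_{A,A} = \det(I_{A}-K_{A,A})$ by \eqref{eq:detImK} applied to the restricted process. Then I would write $\prob{$B\subset\mathcal{J}\mid A\subset\mathcal{J}^c$}$ as a ratio involving the event $\{B\subset\mathcal{J},\ A\cap\mathcal{J}=\emptyset\}$ over $\{A\cap\mathcal{J}=\emptyset\}$. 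The cleanest route is to apply the already-proved inclusion formula to the complement process on the index set $A\cup B$: conditioning on $A\subset\mathcal{J}^c$ and asking for $B\subset\mathcal{J}$, i.e. $B\subset(\mathcal{J}^c)^c$, one computes with the kernel $I-K$ and its Schur complement with respect to the $A$ block, obtaining $I_B - \bigl[(I-K)_{B,B} - (I-K)_{B,A}((I-K)_{A,A})^{-1}(I-K)_{A,B}\bigr]$; a short algebraic simplification, using $(I-K)_{A,A}=I_A-K_{A,A}$ and that the off-diagonal blocks of $I-K$ are just $-K_{B,A}$ and $-K_{A,B}$, collapses this to $\det\bigl(K_{B,B} - K_{B,A}(K_{A,A}-I)^{-1}K_{A,B}\bigr)$, which is the claimed expression.

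I expect the main obstacle to be bookkeeping rather than anything conceptual: making sure that the probability ratios are set up correctly (conditioning on inclusion versus exclusion events and matching them to the right principal minors), and verifying the sign/inverse manipulation $(I-K)_{B,B}-(I-K)_{B,A}(I-K_{A,A})^{-1}(I-K)_{A,B}$ simplifies precisely to $I_B - K_{B,B} + K_{B,A}(K_{A,A}-I)^{-1}K_{A,B}$ so that the outer $\det(I_B - \cdot)$ produces $\det(K_{B,B}-K_{B,A}(K_{A,A}-I)^{-1}K_{A,B})$ with no stray sign. A minor subtlety worth stating explicitly is well-definedness: the formulas presume $K_{A,A}$ (respectively $I_A-K_{A,A}$) is invertible, which holds exactly when the conditioning event $\{A\subset\mathcal{J}\}$ (respectively $\{A\subset\mathcal{J}^c\}$) has positive probability, so that the conditional probabilities make sense; in the degenerate case both sides may be interpreted by continuity. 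Since the block LU identity and \eqref{eq:detImK}, \eqref{eq:traceK} are already available, and the complement-kernel fact is standard, the whole argument is short once the combinatorial setup is fixed.
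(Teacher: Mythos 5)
Your proposal is correct, and it supplies more than the paper itself does: the paper states this proposition with a citation to Poulson and only sketches the inclusion case in the surrounding text, via exactly the block LU / Schur complement identity you use, namely $\det K_{A\cup B,A\cup B}=\det K_{A,A}\cdot\det\bigl(K_{B,B}-K_{B,A}K_{A,A}^{-1}K_{A,B}\bigr)$ divided by $\det K_{A,A}$. So your first half coincides with the paper's (and Poulson's) route. For the exclusion case the paper merely asserts the pivot shift $K_{A,A}\mapsto K_{A,A}-I$ ``in a similar manner''; your derivation via the complementary process is a clean standard way to justify it: the identity $\sum_{T\subseteq A}(-1)^{|T|}\det K_{T,T}=\det(I_A-K_{A,A})$ is purely algebraic, so $\mathcal{J}^c$ is a DPP with kernel $I-K$ even for non-Hermitian $K$ (worth stating explicitly, since this paper insists on non-Hermitian kernels), and then the first formula applied to $I-K$ together with the void-probability identity \eqref{eq:detImK} (equivalently, one more inclusion--exclusion over subsets of $B$, which is what lets you pass from inclusion probabilities of the conditional process to the avoidance probability $\det(I_B-\cdot)$) collapses, after the sign bookkeeping you describe, to $\det\bigl(K_{B,B}-K_{B,A}(K_{A,A}-I)^{-1}K_{A,B}\bigr)$. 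The only steps to make explicit in a written version are precisely the two you flag: that the conditional inclusion probabilities determine the avoidance probability by M\"obius inversion on the finite ground set, and the nondegeneracy assumption that $K_{A,A}$ (resp.\ $I-K_{A,A}$) is invertible so the conditioning makes sense. With those spelled out, your argument is complete and is in the same Schur-complement spirit as the source the paper cites.
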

Performing Bernoulli trials on pivots and using Proposition~\ref{prop:poulsondiscrete} we have the following Algorithm~\ref{alg:LUDPP} from \cite{launay2020exact,poulson2020high}. 

\begin{algorithm}[H]
\caption{\texttt{genDPP}: Sample from a general DPP}\label{alg:LUDPP}
\begin{algorithmic}
    \Function{\texttt{genDPP}}{$K$} \hspace{0.2cm}$\%\,\,K\in\mathbb{R}^{n\times n}$
    \State sample $\gets$ empty vector
    \For {$i=1:n$}
    \If{Bernoulli($K[i,i]$)==1}
    \State Add $i$ to sample
    \Else 
    \State $K[i,i]\gets K[i,i]-1$
    \EndIf
    \State $K[i+1:n,i+1:n]\,\,-=\,\,K[i+1:n,i]*K[i,i+1:n]/K[i,i]$
    \EndFor \\
    \hspace{0.5cm} \Return sample
    \EndFunction
\end{algorithmic}
\end{algorithm}

\par Although this ``greedy" type approach may be inefficient, there is one significant advantage: \textit{Algorithm~\ref{alg:LUDPP} enables sampling of a non-Hermitian DPP}. For example the discretized DPP of Dyson Brownian motion, which we will discuss in Section~\ref{sec:airyprocess}, is a non-Hermitian DPP. Another example is the Aztec diamond domino tiling \cite{jockusch1998random,johansson2002non,johansson2005arctic} which is used in \cite{poulson2020high} as an example of a non-Hermitian DPP.  

\begin{algorithm}[h]
\caption{\texttt{nonOrthoProjDPP}: Sample from a non-Hermitian projection DPP}\label{alg:HybridDPP}
\begin{algorithmic}
    \Function{\texttt{nonOrthoProjDPP}}{$K$} \hspace{0.2cm}$\%\,\,K\in\mathbb{R}^{n\times n}$ non-Hermitian rank $r$ projection
    \State sample $\gets$ empty vector
    \For {$i=1:r$}\\
    \hspace{1cm}Draw $j$ from $\{1,\dots,n\}$ with $\prob{$X=j$} = K[j,j]/(r-i+1)$ \\
    \hspace{1cm}Add $j$ to sample
    \State $K\,\,-=\,\,K[:,j]*K[j,:]/K[j,j]$
    \EndFor\\
    \hspace{0.5cm} \Return sample
    \EndFunction
\end{algorithmic}
\end{algorithm}

\par However one might notice that the DPP kernel of the Aztec diamond obtained from Kenyon's formula using the inverse Kasteleyn matrix \cite{chhita2015asymptotic,kenyon1997local} is non-Hermitian but is still a projection matrix. (A clue would be that any sample always has the fixed size $n(n+1)=\text{number of dominos}$.) When we have a non-Hermitian projection DPP with a small rank, Algorithm~\ref{alg:LUDPP} can be improved by combining with Algorithm~\ref{alg:OrthoprojDPP}. At each step we draw indices as we did in Algorithm~\ref{alg:OrthoprojDPP} from diagonal entries and then we modify the kernel as in Algorithm~\ref{alg:LUDPP}. This reduces the number of steps in Algorithm~\ref{alg:LUDPP} from the size of $K$ to its rank. We briefly sketch this hybrid approach in Algorithm~\ref{alg:HybridDPP}. 

\par We additionally note that unfortunately, the Aztec diamond DPP has its rank and size of the same order, which only obtains a small or even no improvement in practice. Nonetheless, if a kernel $K$ is non-Hermitian projection with $\text{rank}(K)\ll n$, theoretically Algorithm~\ref{alg:HybridDPP} should outperform Algorithm~\ref{alg:LUDPP}. Table \ref{tab:algchoice} is the summary of appropriate choices of exact DPP samplers, depending on the kernel $K$. 

%\alan[inline]{Nice.  Use a green (maybe darker green) {\color{green} $\checkmark$}, and a red{\color{red} x }  .  Better yet Make this a 2x2 matrix rather than the flattened form?.}

\begin{table}[h]
    \centering
    \begin{tabular}{|c|c|c|c|}
    \hline
    Alg \# & Hermitian? & Projection? & Example \\ \hline
    \ref{alg:OrthoprojDPP} & $\color{green!45!black}\checkmark$ & $\color{green!45!black}\checkmark$ & Hermite kernel (finite GUE) \\
    \ref{alg:SymDPP} & $\color{green!45!black}\checkmark$ & {\color{red}x} & Airy kernel (soft-edge truncated) \\
    \ref{alg:LUDPP} & {\color{red}x} & {\color{red}x} & Airy process (Section~\ref{sec:airyprocess}) \\
    \ref{alg:HybridDPP} & {\color{red}x} & $\color{green!45!black}\checkmark$  & Aztec diamond \\ \hline
    \end{tabular}
    \vspace{-0.2cm}\caption{Choice of an exact sampling algorithm depending on properties of the kernel of a DPP.}\vspace{-0.3cm}
    \label{tab:algchoice}
\end{table}

% \begin{table}[h]
%     \begin{tabular}{cc|c|c|}
%     &  & \multicolumn{2}{|c|}{Hermitian?} \\
%     &  & Yes & No \\ \hline
%     \parbox[t]{2mm}{\multirow{6}{*}{\rotatebox[origin=c]{90}{Projection?}}} & \parbox[t]{2mm}{\multirow{3}{*}{\rotatebox[origin=c]{90}{Yes}}} & {\small Alg.~\ref{alg:OrthoprojDPP}} & Algorithm~\ref{alg:HybridDPP} \\ 
%     & & Hermite Kernel & Aztec Diamond \\ 
%     & & (finite GUE) & \\ \cline{2-4}
%     &\parbox[t]{2mm}{\multirow{3}{*}{\rotatebox[origin=c]{90}{No}}} & Algorithm~\ref{alg:SymDPP} & Algorithm~\ref{alg:LUDPP} \\
%     & & Airy kernel & Airy process\\
%     & & (Soft-edge, truncated) & (Section~\ref{sec:airyprocess})\\ \hline
%     \end{tabular}
% \end{table}

\subsection{Application to Aztec diamonds: efficiently sampling a DR path}\label{sec:drpath}

\par Not only can Algorithm~\ref{alg:LUDPP} sample non-Hermitian DPPs but also it enables partial sampling. One application that benefits from the partial sampling of Algorithm~\ref{alg:LUDPP} is the sampling of \textit{DR paths} \cite{stanley1999enumerative} of the Aztec diamond. The \textit{north polar region} (NPR) boundary process is of a great interest due to its connection to corner growth, and eventually to the Airy process \cite{johansson2005arctic}. DR paths of the Aztec diamond are defined as follows: we label vertical dominos east (E) and west (W), according to their checkerboard patterns. A simple way to remember is that the vertical domino fits the westmost corner is a W-domino. Similarly we label horizontal dominos N-dominos and S-dominos. In Figure~\ref{fig:aztec_diamond}, red, blue, green, yellow dominos are W, E, N, S-dominos, respectively. We draw horizontal lines in the middle of S-dominos, $\pm\frac{\pi}{4}$ degree lines passing through the center of W, E-dominos, respectively. The right side of Figure~\ref{fig:aztec_diamond} has such lines drawn in red. It is known that these lines form $n$ continuous paths, which are called the DR paths.

\begin{figure}[h]
    \centering
    \frame{
    \includegraphics[width=0.45\textwidth]{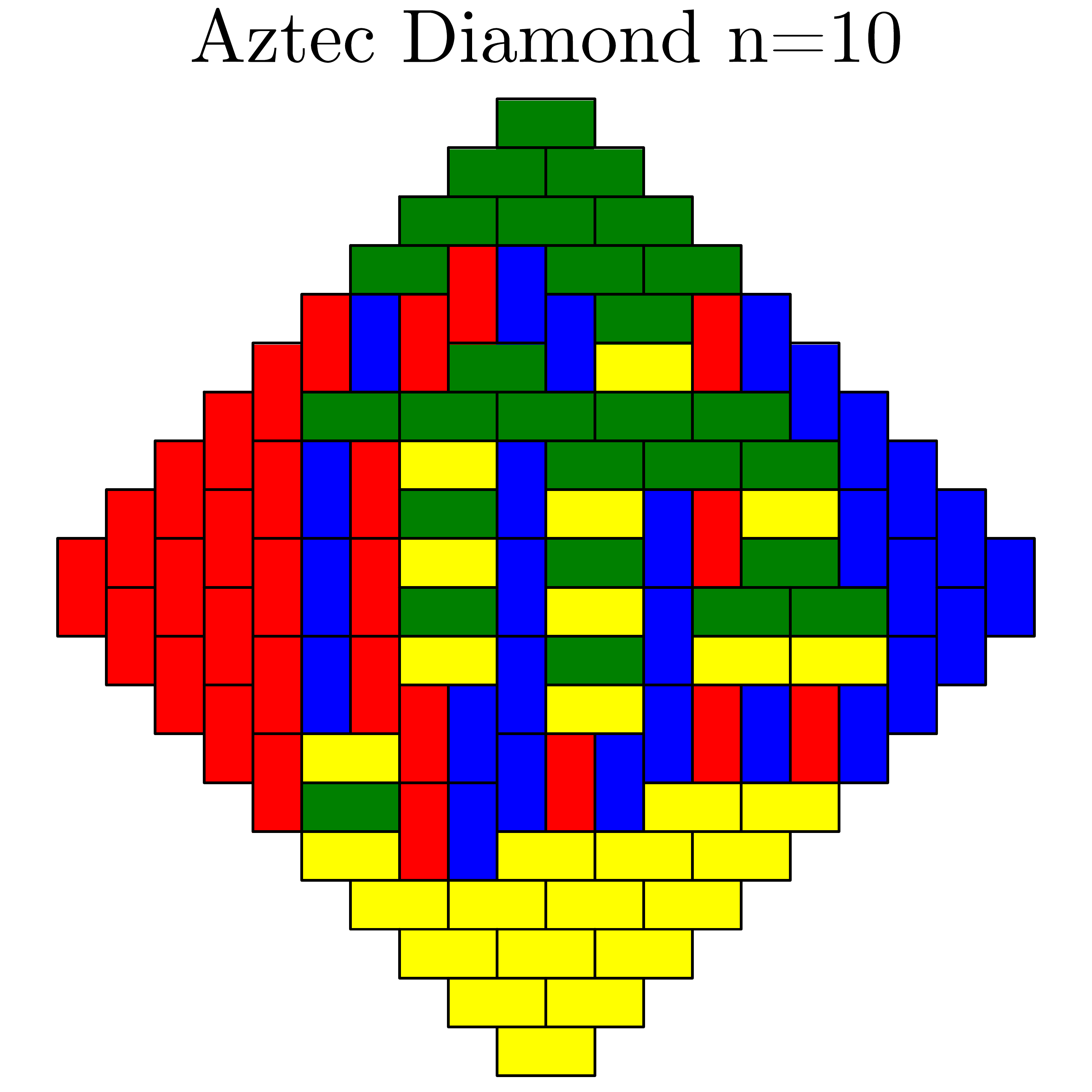}
    \includegraphics[width=0.45\textwidth]{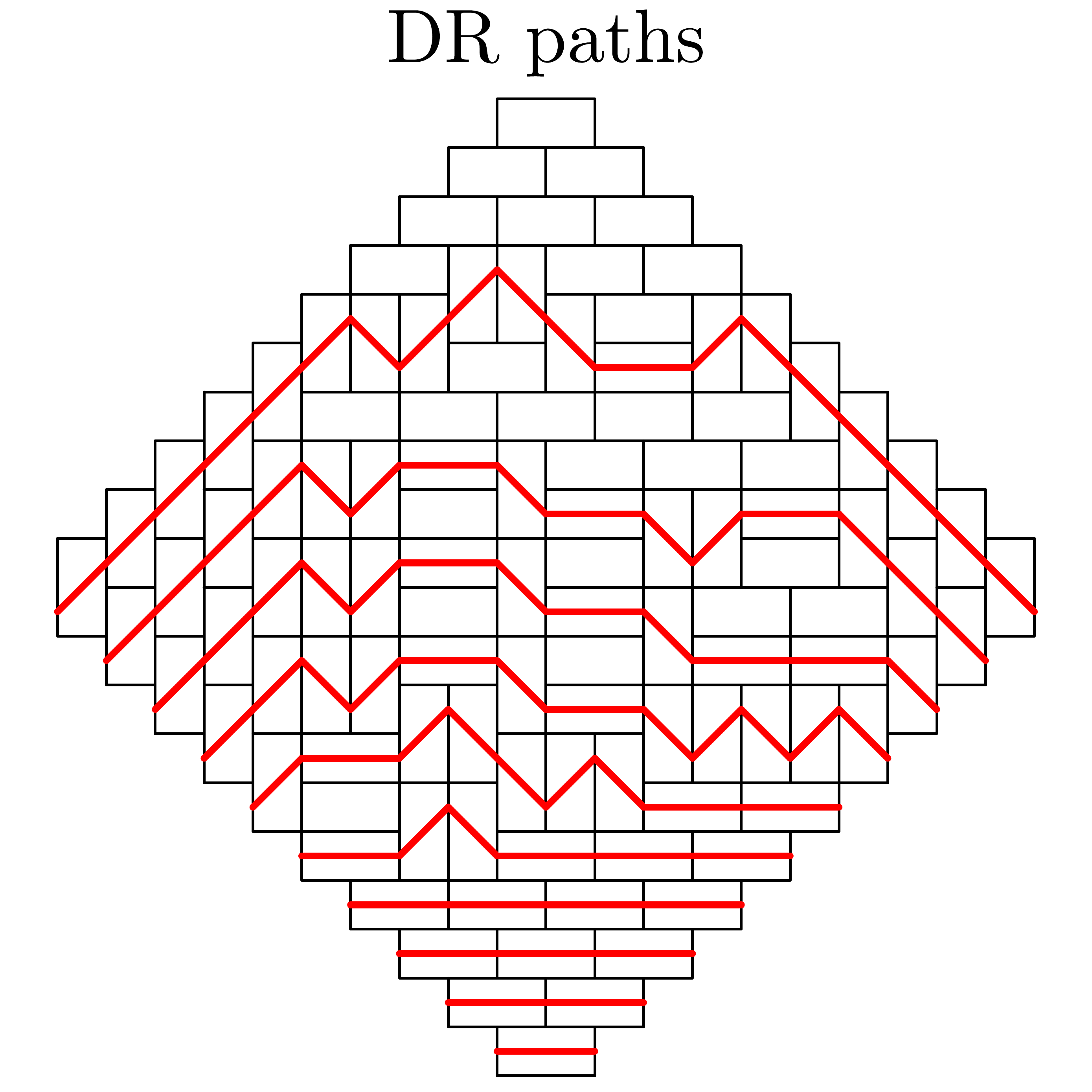}}
    \vspace{-0.2cm}\caption{An Aztec diamond domino tiling with $n=10$ sampled from a DPP (left) and corresponding DR paths in red (right)}
    \label{fig:aztec_diamond}
\end{figure}

\par An interesting observation is that we do not need the whole Aztec domino configuration to obtain the top DR path. Using Algorithm~\ref{alg:LUDPP} partially we can efficiently sample the top DR path (and similarly any $k^\text{th}$ DR path) by only \textit{observing} the possible dominos along the path. We first begin by observing the westmost corner. (Recall the last point of the remark in the introduction, that we can observe in any desired order in Algorithm~\ref{alg:LUDPP}.) For instance let us assume we sampled a W-domino there. Then we `observe' the three next possible (W, E, S) dominos that share the upper half of the right edge of the sampled W-domino, since they are the three possible extensions to the current path.  We iteratively do this until we reach the eastmost corner, which is the end of the top DR path. 

\par Complexity of sampling the whole Aztec diamond by Algorithm~\ref{alg:LUDPP} is $O(n^6)$ (LU decomposition), since the kernel size is $4n^2$. On the other hand, the partial sampling of the top DR path described above has $O(n^3)$ complexity, if we perform a dynamic memory allocation as we move along the path. (Notice in this case the `effective' kernel has size $O(n)$.) Figure~\ref{fig:aztec_diamond2} illustrates this point, where sampling only the top DR path is about 40 times faster. This partial sampling is comparable to the known algorithms, e.g., the shuffling algorithm \cite{elkies1992alternating}.

\begin{figure}[h]
    \centering
    \fbox{
    \begin{tabular}{p{0.9\textwidth}}
    \includegraphics[width=0.43\textwidth]{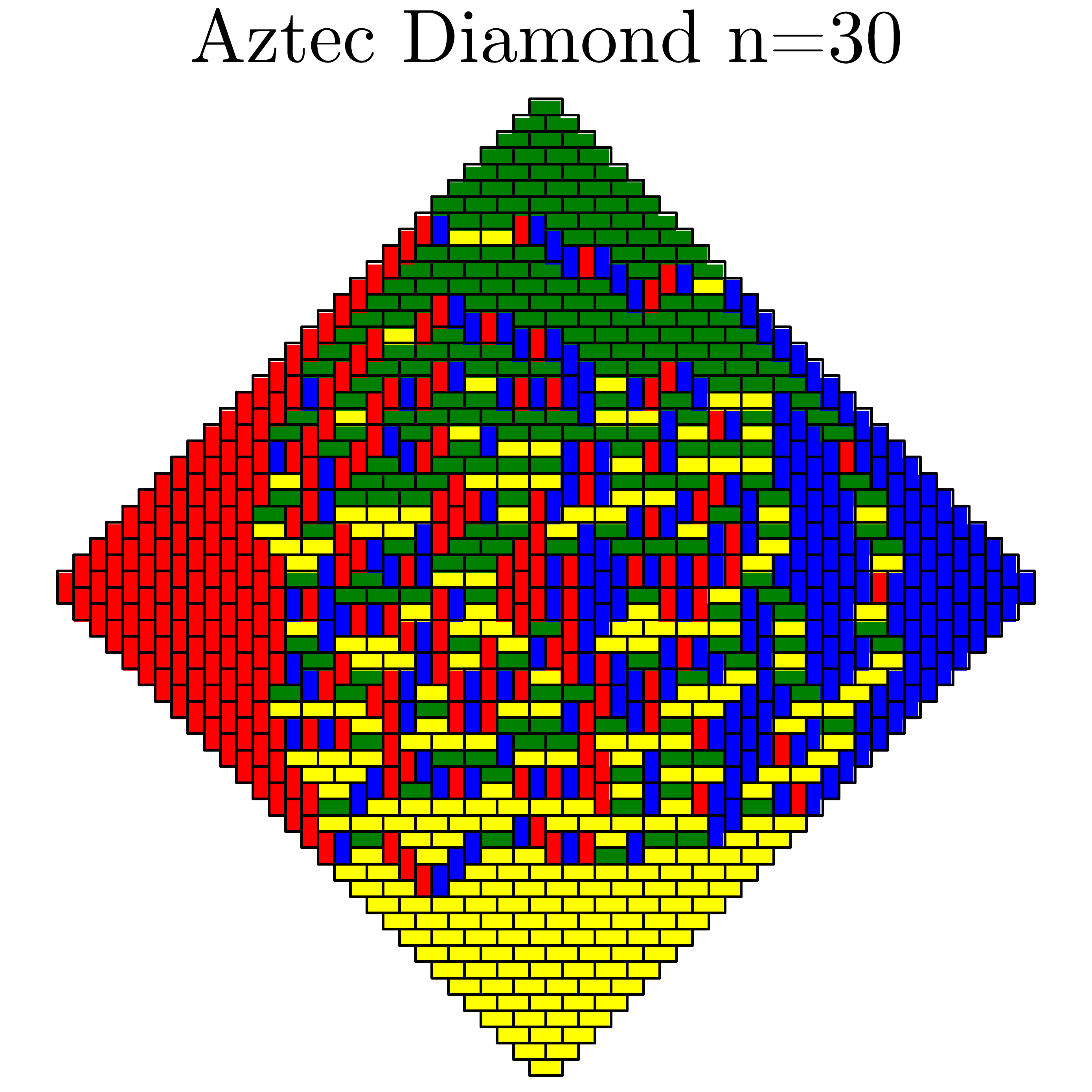}
    \includegraphics[width=0.43\textwidth]{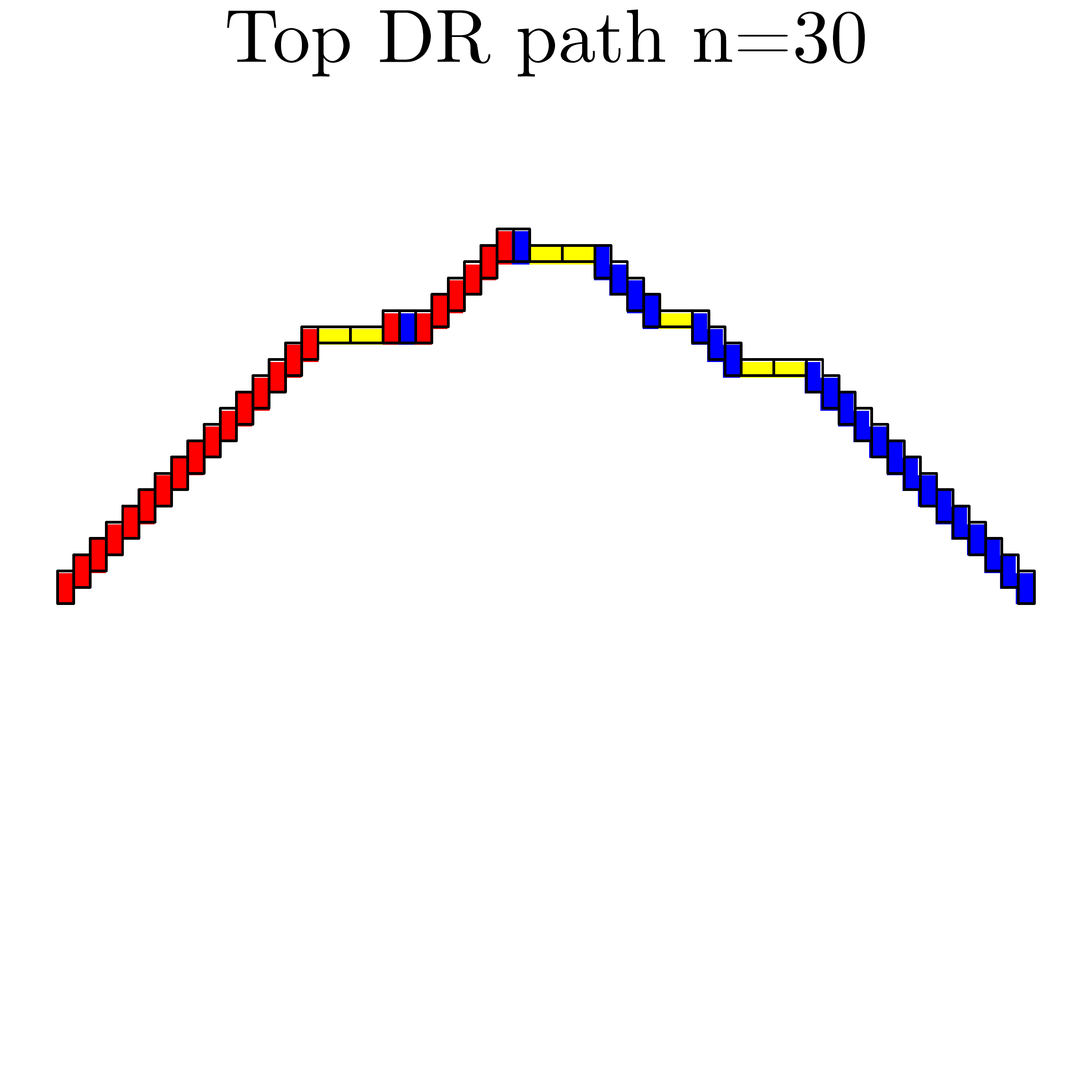} \\
    \centering DPP Sample time: 52.44 sec \hspace{1cm} DPP Sample time: 1.28 sec
    \end{tabular}}
    \caption{An $n=30$ Aztec diamond sampled from a DPP (left) and a top DR path sampled with partial sampling without the whole domino (right). The sampling times are 52.44 seconds and 1.28 seconds, respectively. With 50 seconds one can sample a top DR path with $n=50$.}
    \label{fig:aztec_diamond2}
\end{figure}

\subsection{Continuous DPPs}\label{sec:contDPPsampling}

\par Many concepts in discrete DPPs extend to continuous DPPs. The ground set $G$ now becomes a continuous interval (or any set) and the marginal kernel matrix becomes the kernel function $K:G\times G\to \mathbb{C}$. A continuous DPP defines the following $n$-point correlation function, which is the continuous analogue of $\prob{$S\subset \mathcal{J}$}$ in \eqref{eq:DPPdefinition},
\begin{equation*}
    p(x_1, \dots, x_n) = \det\left([K(x_i, x_j)]_{i, j=1, \dots, n}\right).
\end{equation*}
One way to describe the $n$-point correlation function $p(x_1, \dots, x_n)$ is the following:
\begin{equation}\label{eq:npointcordef}
    \lim_{\delta x\to 0}\frac{1}{(\delta x)^n}{\prob{$n$ points at length $\delta x$ intervals around $x_1, \dots, x_n$}}.
\end{equation}

\par Sampling algorithms for a continuous DPP could be generalized from the discrete case. For a continuous projection DPP with a bounded trace, i.e., 
\begin{equation*}
    \int K(x, y)K(y, z) dy = K(x, z)\hspace{0.5cm}\text{and}\hspace{0.5cm}\int K(x, x)dx < \infty,
\end{equation*}
Algorithm~\ref{alg:HybridDPP} works in a similar manner. The point selection at each timestep is now a univariate random variable with its PDF proportional to the diagonal $K(x, x)$. More details could be found in \cite{guillaume2020sampling}.

\par Another method for sampling from a continuous DPP is by discretizing a continuous DPP to a discrete DPP. Let us use the Hermite kernel $K_\text{Herm}^{(N)}$, \eqref{eq:HermiteKernel} as an example. To create a finite matrix we truncate the ground set. For the Hermite kernel with $N=5$, it is known that the largest eigenvalue lies around $\sqrt{2N}\approx 3$, and the probability that an eigenvalue lies outside a wide interval of length $L$, for example $(-10, 10)$, is already much lower than the double precision machine epsilon. Then with $M\approx L/\delta x$ length $\delta x$ intervals in the truncated region, create an $M\times M$ matrix $K$ such that $K_{ij} = K(x_i, x_j) \delta x$, where $x_j$ is the midpoint of the $j^\text{th}$ interval.\footnote{Of course, as $\delta x\to 0$ this does not have to be the midpoint.} From \eqref{eq:npointcordef} the determinants of principal submatrices approach probabilities of sample points lying around corresponding intervals, as $\delta x\to 0$. On the other hand if one tries to compute integrals such as $\tr K = \int K(x, x)dx$ or the Fredholm determinant $\det(I-K)$, one may use weights and points from quadrature rules, see \cite{bornemann2009numerical} for details. In Section~\ref{sec:airyprocess} we discretize the extended Airy kernel to sample Airy processes using Algorithm~\ref{alg:LUDPP}.

\section{The conditional DPP method applied to random matrix theory}\label{sec:mainresults}

Let $K$ be a kernel for an integral operator such that the $n$-point correlation function $p$ of some random matrix eigenvalues (or generally, any $\beta=2$ orthogonal polynomial ensemble) is given as an $n\times n$ determinant \eqref{eq:npointdeterminant}. Examples of such random matrices are the GUE, LUE, soft-edge, hard-edge scaling limit. As mentioned above, $p(x_1, \dots, x_n)$ is the continuous analogue of the set-level complementary cumulative distribution function (CCDF), $\prob{$S\subset \mathcal{J}$}$ in \eqref{eq:DPPdefinition}. 

\par The probability that no eigenvalue exists in a given interval $J$ plays an important role. In the continuous case, it is given as the following Fredholm determinant
\begin{equation}\label{eq:detImKcont}
    \prob{No eigenvalue in $J$} = \det(I - K\rt_J),
\end{equation}
which is a continuous generalization of \eqref{eq:detImK}. 

\subsection{Kernel of the DPP from conditional probabilities}\label{sec:detexpressions}
Let us first prove Proposition~\ref{prop:conditionalkernel}, which is the continuous analogue of Proposition~\ref{prop:poulsondiscrete}. 
\begin{proof}[Proof of Proposition~\ref{prop:conditionalkernel}]
One step of the LU decomposition (in reverse order) of the $(n+1)$-point correlation function is
\begin{equation*}
    [K(x_i, x_j)]_{i, j=1, \dots, n+1} = \twotwo{\left[K^{(x_{n+1})}(x_i, x_j)\right]_{i,j=1, \dots,n}}{v}{0}{1} \twotwo{I_n}{0}{w}{K(x_{n+1}, x_{n+1})},
\end{equation*}
where $v = [K(x_i, x_{n+1})]_{i=1, \dots, n}/K(x_{n+1}, x_{n+1})$ and $w^T = [K(x_{n+1}, x_i)]_{i=1, \dots, n}$. From the multiplicativity of determinants we get 
\begin{equation*}
    \det\left(\left[K(x_i, x_j)\right]_{i, j=1, \dots, n+1}\right) = \det\left(\left[K^{(x_{n+1})}(x_i, x_j)\right]_{i, j=1, \dots, n}\right)\cdot K(x_{n+1}, x_{n+1}).
\end{equation*}
Then, since $p^{(x_{n+1})}(x_1, \dots, x_n)=\det([K^{(x_{n+1})}(x_i, x_j)]_{i,j=1,\dots,n})$ and using \eqref{eq:npointcordef} for the left hand side and $K(x_{n+1},x_{n+1})$ we have
\begin{align*}
    p^{(x_{n+1})}&(x_1, \dots, x_n) \\
    &= \lim_{\delta x\to 0}\frac{1}{(\delta x)^n}\frac{\prob{Eigenvalues at $\delta x$ intervals around $x_1, \dots, x_{n+1}$}}{\prob{Eigenvalue at $\delta x$ interval around $x_{n+1}$}}, 
\end{align*}
which is the desired $n$-point correlation function conditioned on the event that an eigenvalue existing around an infinitesimal interval  containing  $x_{n+1}$. 
\end{proof}

\par Note that we can also generate a kernel for an $n$-point correlation function conditioned on the event that a fixed location \textit{does not} contain an eigenvalue, by replacing the denominator of the kernel \eqref{eq:newkernel} with  $K(s, s)-1$. However, we note that the condition that an eigenvalue does not exist at a specific location is less useful in our applications than the condition that an eigenvalue exists at a specific location, in the continuous setting. 

\par We generalize Proposition~\ref{prop:conditionalkernel} to a conditional DPP kernel with any number ($m$) of forced eigenvalues. 
\begin{proposition}\label{prop:conditionalkernelmpoints}
With the same assumptions as in Proposition~\ref{prop:conditionalkernel}, the $n$-point correlation function $p^{(s_1, \dots, s_m)}$ of the eigenvalues conditioned on the event that $m$ eigenvalues already exist around infinitesimal intervals around $s_1, \dots, s_m$ is
\begin{equation*}
    p^{(s_1,\dots,s_m)}(x_1, \dots, x_n) = \det\left(\left[K^{(s_1,\dots,s_m)}(x_i, x_j)\right]_{i,j=1,\dots,n}\right),
\end{equation*}
where the kernel $K^{(s_1,\dots,s_m)}$ is given as
\begin{align*}
    K&^{(s_1, \dots, s_m)}(x, y) \\&= K(x,y) - 
    \begin{bmatrix} K(x, s_1) \\ \vdots \\ K(x, s_m) \end{bmatrix}^T
    \!\!\begin{bmatrix} K(s_1, s_1) & \cdots\!\! & K(s_1, s_m) \\
     \vdots & \ddots\!\! & \vdots \\ 
     K(s_m, s_1) & \cdots\!\! & K(s_m, s_m) \end{bmatrix}^{-1}\!\!
    \begin{bmatrix} K(s_1, y) \\ \vdots \\ K(s_m, y) \end{bmatrix}.
    \end{align*}
\end{proposition}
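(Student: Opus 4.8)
The plan is to repeat, essentially verbatim, the one-point argument used in the proof of Proposition~\ref{prop:conditionalkernel}, but eliminating an $m\times m$ pivot \emph{block} instead of the scalar $K(s,s)$. Order the $n+m$ arguments of the correlation function so that $x_1,\dots,x_n$ come first and $s_1,\dots,s_m$ last, and write the $(n+m)\times(n+m)$ matrix $[K(\cdot,\cdot)]$ in block form with $A=[K(x_i,x_j)]_{i,j=1,\dots,n}$, $B=[K(x_i,s_j)]$, $C=[K(s_i,x_j)]$, and $D=[K(s_i,s_j)]_{i,j=1,\dots,m}$.

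First I would record the block LU factorization (the block analogue of the one displayed in the proof of Proposition~\ref{prop:conditionalkernel})
\[
\twotwo{A}{B}{C}{D}=\twotwo{A-BD^{-1}C}{BD^{-1}}{0}{I_m}\twotwo{I_n}{0}{C}{D},
\]
valid whenever $D$ is invertible, and observe that the $(1,1)$-block $A-BD^{-1}C$ is, by the very definition of $K^{(s_1,\dots,s_m)}$ given in the statement, exactly $\big[K^{(s_1,\dots,s_m)}(x_i,x_j)\big]_{i,j=1,\dots,n}$. Taking determinants and using multiplicativity together with the block-triangular structure of the two factors gives
\[
\det\big([K(\cdot,\cdot)]_{(n+m)\times(n+m)}\big)=\det\Big(\big[K^{(s_1,\dots,s_m)}(x_i,x_j)\big]_{i,j=1,\dots,n}\Big)\cdot\det D .
\]

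Next I would interpret both sides through the limiting description \eqref{eq:npointcordef}: the left-hand determinant is the $(n+m)$-point correlation function $p(x_1,\dots,x_n,s_1,\dots,s_m)$, while $\det D=\det\big([K(s_i,s_j)]\big)$ is the $m$-point correlation function $p(s_1,\dots,s_m)$. Dividing and applying \eqref{eq:npointcordef} once more,
\begin{align*}
\det\Big(\big[K^{(s_1,\dots,s_m)}(x_i,x_j)\big]_{i,j=1,\dots,n}\Big)
&=\lim_{\delta x\to 0}\frac{1}{(\delta x)^n}\\
&\quad\times\frac{\prob{Eigenvalues at $\delta x$ intervals around $x_1,\dots,x_n,s_1,\dots,s_m$}}{\prob{Eigenvalues at $\delta x$ intervals around $s_1,\dots,s_m$}},
\end{align*}
and by definition of conditional probability the right-hand side is the conditional $n$-point correlation function $p^{(s_1,\dots,s_m)}(x_1,\dots,x_n)$. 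This is the claim.

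The only point that actually needs care is the invertibility of $D=[K(s_i,s_j)]$ — equivalently $p(s_1,\dots,s_m)\neq 0$ — which is the $m$-point analogue of the condition $K(s,s)\neq 0$ implicit in Proposition~\ref{prop:conditionalkernel}, and which is precisely the requirement that conditioning on eigenvalues at those locations be well defined (it holds for generic distinct $s_1,\dots,s_m$ in the interior of the support). I would also remark, as a consistency check and an alternative route, that one may instead apply Proposition~\ref{prop:conditionalkernel} $m$ times in succession — conditioning on $s_1$, then on $s_2$, and so on — and that the resulting kernel coincides with $K^{(s_1,\dots,s_m)}$ above by the quotient identity for Schur complements; this simultaneously shows the answer is independent of the order in which the $s_j$ are imposed, matching the third remark of the introduction. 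I do not expect a genuine obstacle here: the argument is a routine block version of the $m=1$ proof, with the Schur-complement bookkeeping the only thing to get right.
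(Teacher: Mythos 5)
Your proposal is correct and follows exactly the route the paper takes: its proof of Proposition~\ref{prop:conditionalkernelmpoints} is precisely ``replace the scalar pivot in the proof of Proposition~\ref{prop:conditionalkernel} by the block LU decomposition of the $(n+m)\times(n+m)$ matrix with partition $(n,m)$,'' take determinants, and interpret the ratio via \eqref{eq:npointcordef} as a conditional correlation function. Your added remarks on the invertibility of $[K(s_i,s_j)]$ and on iterating the one-point result via the Schur-complement quotient identity are sound but not needed beyond what the paper's argument already contains.
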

\begin{proof} 
The proof is analogous to the proof of Proposition~\ref{prop:conditionalkernel} replacing the LU decomposition with the block LU decomposition of an $(n+m)\times (n+m)$ matrix, with row/column partitions $(n, m)$.  
\end{proof}

\subsection{The PDF of extreme (largest) eigenvalues}\label{sec:extremepdf}
One can derive a new expression for the PDF of the largest eigenvalue of a random matrix. Let us first review some basic eigenvalue statistics and conventions. For the $N\times N$ GUE we have 
\begin{equation*}
    E_2^{(N)}(0;J) := \prob{No eigenvalue of the $N\times N$ GUE is in $J$} = \det(I - K_\text{Herm}^{(N)}\rt_{J}),
\end{equation*}
where the Hermite kernel is defined in \eqref{eq:HermiteKernel}. In particular we denote $E_2^{(N)}(0;(0,s))$ simply by $E^{(N)}(0;s)$.

\par As $N\to \infty$ some scaling limits are defined. With the sine kernel $\Ks(x, y) = \frac{\sin\pi(x-y)}{\pi(x-y)}$ one obtains the bulk scaling limit\footnote{In fact, an appropriate scaling of the $\beta=2$ Laguerre and Jacobi ensembles in the bulk also yield the same bulk limit \cite{nagao1991correlation}.} defined with mean spacing 1,
\begin{equation*} 
    E(0;s) := \lim_{N\to\infty}E^{(N)}(0;\frac{\pi}{\sqrt{2n}}s) = \det\left(I - \Ks\rt_{(0, s)}\right).
\end{equation*}
Also with the Airy kernel
\begin{equation}\label{eq:airykernel}
    \Kai(x, y) = \frac{\Ai(x)\Ai'(y) - \Ai'(x)\Ai(y)}{x-y},
\end{equation}
we have the following Fredholm determinant representation of the soft-edge scaling limit of the GUE\footnote{As in the bulk, this can also be the largest eigenvalue of the LUE in the appropriate soft-edge scaling limit \cite{forrester1993spectrum}.} (the Tracy--Widom distribution $F_2$)
\begin{equation*}
    F_2(s) :=\!\lim_{N\to\infty}\!E_2^{(N)}\!\!\left(0; (\sqrt{2n}+\frac{s}{\sqrt{2n^{1/3}}}, \infty)\right)\!\! = \prob{$\lambda_\text{max}\leq s$} = \det\left(I - \Kai\rt_{(s, \infty)}\right).
\end{equation*}

\par Depending on the position and restrictions imposed on the eigenvalues, $\det(I-K)$ turns into several different distribution functions. At the edge (either hard or soft) $\det(I-K)$ becomes the CDF. In particular, at the $+\infty$ side of the soft-edge, $\det(I-\Kai\rt_{(s,\infty)})$ equals $\prob{$\lambda_{\text{max}} \leq s$}$, the CDF of the largest eigenvalue. Similarly, at the LUE hard-edge\footnote{Also the Jacobi ensemble with $\beta=2$ has the same hard-edge scaling limit \cite{borodin2003increasing}.} $\det(I-\Kbess{\alpha}\rt_{(0,s)})$ equals $\prob{$\lambda_{\text{min}}\geq s$}$, the CCDF.

\par On the other hand in the bulk, $\det(I-K)$ is not a CDF. Rather, its derivative is a CDF, i.e., $E(0;s)$ in the bulk has the following derivatives \cite[Chapter 6.1.2]{mehta2004random},
\begin{equation}\label{eq:F0p0def}
    \tilde{F}(0;s):= -\frac{d}{ds}E(0;s), \hspace{1cm} p(0;s) := -\frac{d}{ds}\tilde{F}(0;s).
\end{equation}
The first derivative $\tilde{F}(0;s)$ is the probability (in the bulk) that, for a randomly chosen level around zero, the distance $s$ to the right contains no eigenvalue. If we let a random variable $D$ be the distance (again, starting from any randomly chosen level) to the right until the next level, $\tilde{F}(0;s) = \mathbb{P}(D > s)$ is a CCDF. It follows that $p(0;s)$ is the PDF of $D$.

\par Some of these probabilities can alternatively be described by conditional probabilities. For example, the PDF of the smallest eigenvalue at the hard-edge is the product of (1) the $1$-point correlation function at $s$ ($=\Kbess{\alpha}(s,s)$) and (2) the probability that no eigenvalue lies in $(0, s)$, conditioned on an eigenvalue existing around $s$. The latter could be obtained from Proposition~\ref{prop:conditionalkernel} and becomes a new expression for the PDF of the smallest eigenvalue at the hard-edge,
\begin{equation}\label{eq:pdfhardedge}
    f_{\text{hard}, \alpha}(s) = \underbrace{\Kbess{\alpha}(s,s)}_{\text{(1) Level at $s$}}\hspace{0.2cm}\cdot\hspace{-0.4cm}\underbrace{\det\left(I - \Kbess{\alpha}^{(s)}\rt_{(0, s)}\right)}_{\text{(2) No levels on $(0,s)$, given level at $s$}}\hspace{-0.6cm}.
\end{equation}
In Proposition~\ref{prop:twpdf} we have already introduced the PDF of the Tracy--Widom distribution with the same idea. Generalizing these, we get the following Corollary:

\begin{corollary}\label{cor:derivativekernel}
Given a kernel $K$ as in Proposition~\ref{prop:conditionalkernel}, a kernel of a continuous DPP of some random matrix eigenvalues. Let 
\begin{equation*}
    f(a, b) = \det(I - K\rt_{(a, b)}),
\end{equation*}
the probability that no eigenvalue lies in $J=(a, b)$. Then, the following holds:
\begin{align}\label{eq:daderivative} 
    \frac{d}{da}f(a, b) &= K(a, a)\det(I - K^{(a)}\rt_{J}) = \prob{eigenvalue at $a$ and none in $J$}, \\ \label{eq:dbderivative}
    -\frac{d}{db}f(a, b) &= K(b, b) \det(I - K^{(b)}\rt_{J})= \prob{eigenvalue at $b$ and none in $J$},
\end{align}
with the kernels $K^{(a)}$ and $K^{(b)}$ defined as in \eqref{eq:newkernel}.
\end{corollary}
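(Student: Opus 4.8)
The plan is to interpret the derivative of the Fredholm determinant $f(a,b) = \det(I - K\!\restriction_{(a,b)})$ as an infinitesimal version of the conditional-probability identity. The key observation is that $f(a,b)$ is the probability that no eigenvalue lies in $(a,b)$, so the difference $f(a,b) - f(a+\delta a, b)$ is exactly the probability that no eigenvalue lies in $(a+\delta a, b)$ but at least one lies in $(a, a+\delta a)$. As $\delta a \to 0$, the probability of two or more eigenvalues in the shrinking interval $(a, a+\delta a)$ is $o(\delta a)$, so to first order this equals the probability of exactly one eigenvalue in $(a, a+\delta a)$ and none in $(a+\delta a, b)$.

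First I would write, using \eqref{eq:npointcordef} for the $1$-point function, that the probability of an eigenvalue in an interval of length $\delta a$ around $a$ is $K(a,a)\,\delta a + o(\delta a)$. Then, conditioning on that event, Proposition~\ref{prop:conditionalkernel} says the eigenvalues elsewhere form a DPP with kernel $K^{(a)}$, so the conditional probability that none lies in $(a+\delta a, b)$ is $\det(I - K^{(a)}\!\restriction_{(a+\delta a, b)})$, which tends to $\det(I - K^{(a)}\!\restriction_{(a,b)})$ as $\delta a \to 0$ by continuity of the Fredholm determinant in the endpoints of the interval. Multiplying these and taking $\delta a \to 0$ yields
\begin{equation*}
    -\frac{d}{da} f(a,b) \;=\; -K(a,a)\,\det\!\left(I - K^{(a)}\!\restriction_{(a,b)}\right),
\end{equation*}
i.e., $\frac{d}{da} f(a,b) = K(a,a)\det(I - K^{(a)}\!\restriction_{(a,b)})$, which is \eqref{eq:daderivative}; the identity \eqref{eq:dbderivative} follows by the symmetric argument at the right endpoint $b$ (now $f$ decreases as $b$ increases, accounting for the sign).

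An alternative, more computational route is to differentiate $\det(I - K\!\restriction_{(a,b)})$ directly using the formula $\frac{d}{da}\log\det(I - K\!\restriction_{(a,b)}) = -\operatorname{tr}\!\big((I-K\!\restriction_{(a,b)})^{-1} \frac{d}{da}K\!\restriction_{(a,b)}\big)$, where $\frac{d}{da}$ of the restricted operator produces a rank-one boundary term supported at $x=a$ (the Dirichlet-type boundary variation of the interval). One would then need to recognize that $K(a,a)$ times the resolvent-conjugated kernel, evaluated at the boundary, reassembles into $K(a,a)\det(I - K^{(a)}\!\restriction_{(a,b)})$ after using the Schur-complement identity that defines $K^{(a)}$ in \eqref{eq:newkernel}. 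This is essentially the classical derivation of the Tracy--Widom ODE specialized one step earlier, but it is more delicate to make rigorous.

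The main obstacle is the analytic justification of the limit interchange in the probabilistic argument: one must argue that the $o(\delta a)$ error from multiple eigenvalues in $(a, a+\delta a)$ is genuinely negligible (uniformly enough to survive division by $\delta a$), and that $\delta a \mapsto \det(I - K^{(a)}\!\restriction_{(a+\delta a, b)})$ is continuous at $\delta a = 0$ — which requires the kernel $K^{(a)}$ to be well-behaved (trace-class on the relevant interval, with a possibly integrable singularity as $x \to a$), so that shrinking the interval to include the endpoint $a$ does not jump the Fredholm determinant. Given the explicit forms of $K^{(s)}$ appearing in Proposition~\ref{prop:twpdf} and \eqref{eq:pdfhardedge}, where the apparent singularity at $x=s$ is in fact removable, this regularity holds in all cases of interest, so I would state the corollary at this level of rigor and refer to the smoothness of the standard kernels (Airy, Bessel, Hermite) for the technical points.
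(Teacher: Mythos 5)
Your probabilistic difference-quotient argument is essentially the paper's own proof: interpret the increment of $f$ as the probability of an eigenvalue in $(a,a+\delta a)$ together with none in $(a+\delta a,b)$, condition via Proposition~\ref{prop:conditionalkernel} to get $\det(I-K^{(a)}\rt_{(a+\delta a,b)})$, and use the $1$-point correlation function to produce the factor $K(a,a)$, exactly as in \eqref{eq:daderivative}. One bookkeeping slip: $f(a,b)-f(a+\delta a,b)$ is \emph{minus} the probability you name (the event of no eigenvalue in $(a,b)$ is contained in the event of no eigenvalue in $(a+\delta a,b)$), but this cancels against the stray sign in your concluding display, so the result is unaffected; your additional remarks on the $o(\delta a)$ control of multiple points and the endpoint-continuity of the Fredholm determinant only make explicit what the paper leaves implicit.
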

\begin{proof}
We prove \eqref{eq:daderivative} and the proof for \eqref{eq:dbderivative} is similar. Let $a' = a +\delta a$. 
\begin{align*}
    \frac{1}{\delta a}(&f(a', b)-f(a, b)) = \frac{1}{\delta a}\left(\prob{Nothing in $(a', b)$} - \prob{Nothing in $(a, b)$}\right)\\
    &= \delta a^{-1}\prob{Eigenvalue at $(a, a')$, none in $(a', b)$} \\
    &= \delta a^{-1}\prob{None in $(a', b) \,|$ Eigenvalue at $(a, a')$}\cdot \prob{Eigenvalue at $(a, a')$} \\
    &= \delta a^{-1}\det(I-K^{(a)}\rt_{(a', b)})\cdot \prob{Eigenvalue at $(a, a')$}
\end{align*}
As we let $\delta a\to 0$, $\delta a^{-1}\prob{Eigenvalue at $(a, a')$}$ goes to $K(a, a)$. 
\end{proof}
\par Let us also apply Corollary~\ref{cor:derivativekernel} to the distribution function \eqref{eq:F0p0def} in the bulk. For $\tilde{F}(0;s)$, since $\Ks(0,0) = 1$ we have
\begin{equation}\label{eq:F0fredholm}
    \tilde{F}(0;s) = -\det(I - \Ks^{(0)}\rt_{(0, s)}),
\end{equation}
where \eqref{eq:newkernel} defines $\Ks^{(0)} = \frac{\sin \pi(x-y)}{\pi(x-y)} - \frac{\sin\pi x \sin\pi y}{\pi^2xy}$. Applying the Corollary once more we get
\begin{equation}\label{eq:p0freholm}
    p(0;s) = \Ks^{(0)}(s,s)\cdot \det(I - \Ks^{(0)*}\rt_{(0,s)}),
\end{equation}
where the kernel $\Ks^{(0)*}(x, y) := \Ks^{(0)}(x,y)-\frac{\Ks^{(0)}(x,s)\Ks^{(0)}(s,y)}{\Ks^{(0)}(s,s)}$ is a twice derived sine kernel. Figure~\ref{fig:F0p0} is the plot of $\tilde{F}(0;s)$, $p(0;s)$ using \eqref{eq:F0fredholm}, \eqref{eq:p0freholm}, respectively. 

\begin{figure}[h]
    \begin{tabular}{cc}
    $\tilde{F}(0;s)$ & $p(0;s)$ \\
    \includegraphics[width=0.47\textwidth]{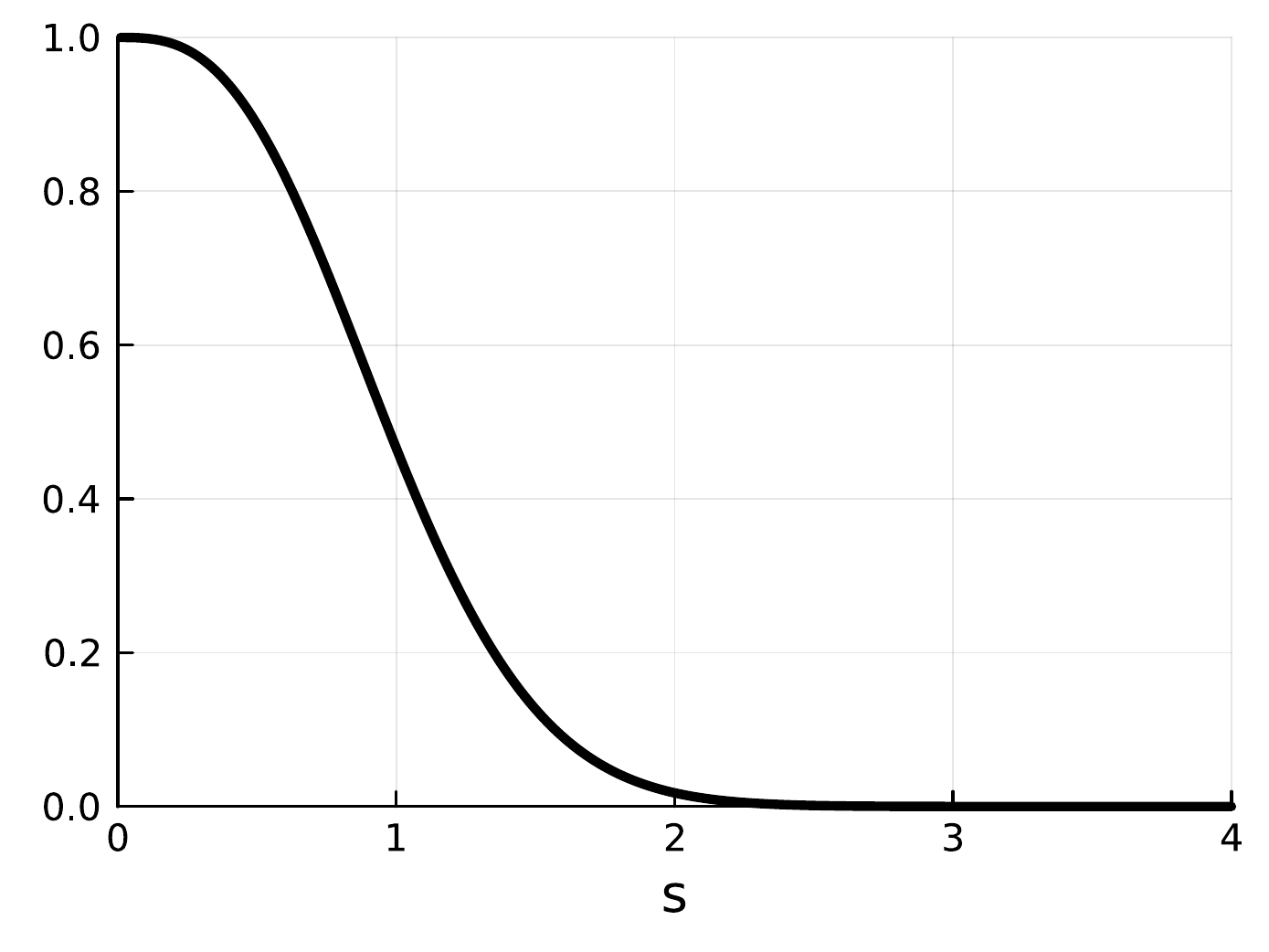} & \includegraphics[width=0.47\textwidth]{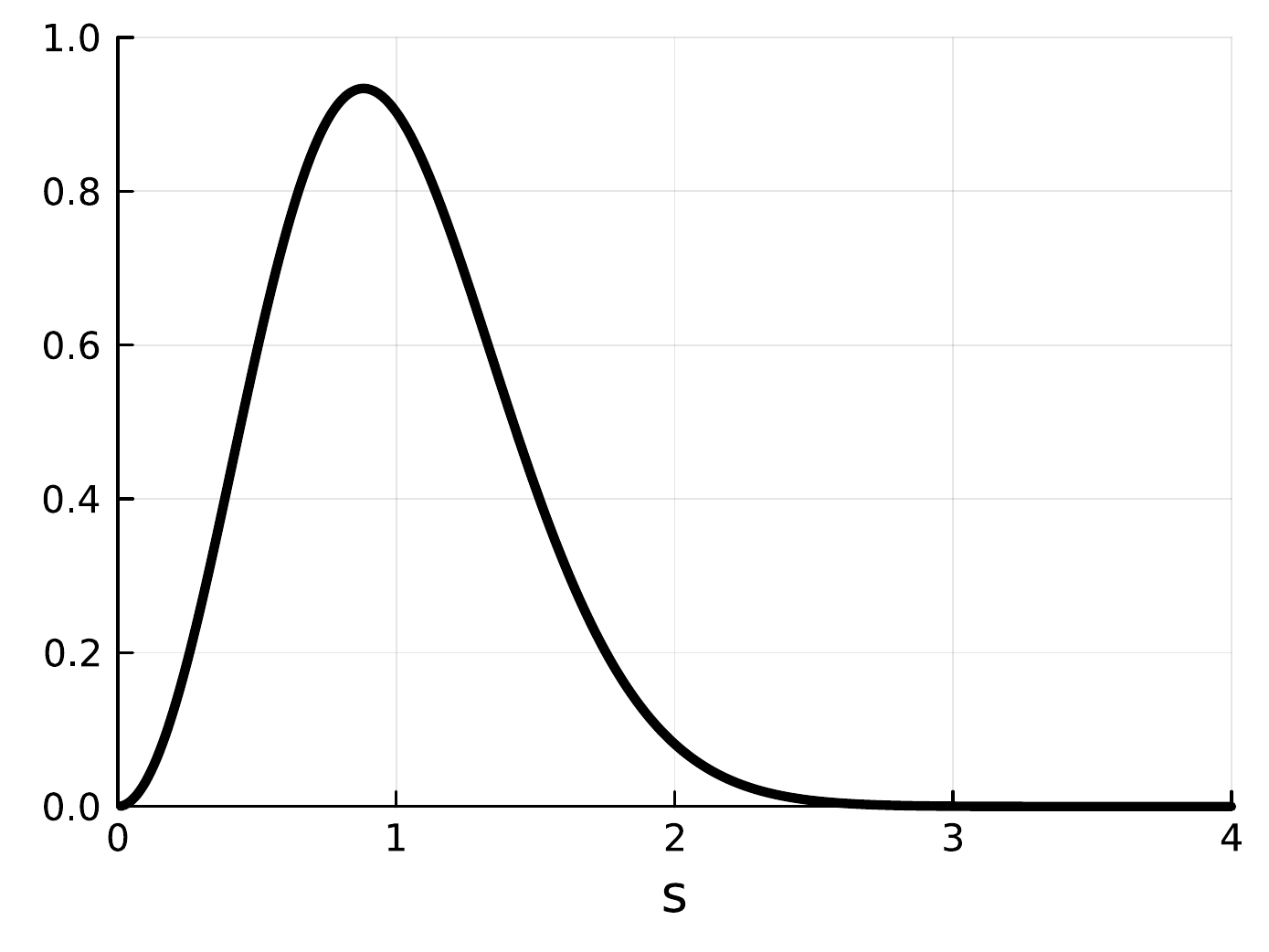}
    \end{tabular}
    \caption{Plots of $\tilde{F}(0;s)$ (left) and $p(0;s)$ (right) defined in \eqref{eq:F0p0def}. Numerical computation was done by evaluating Fredholm determinants \cite{bornemann2009numerical} in equations \eqref{eq:F0fredholm}, \eqref{eq:p0freholm}.\label{fig:F0p0}}
\end{figure}

\par Moreover, Table~\ref{tab:p0moments} is the first four moments of $D$ obtained from the computation of $p(0;D=s)$ values. See code \href{https://github.com/sw2030/RMTexperiments/blob/main/codes/F0p0.ipynb}{\texttt{F0p0}} for the implementation resulting in Figure~\ref{fig:F0p0} and Table~\ref{tab:p0moments}.
\begin{table}[h]
    \centering
    \begin{tabular}{|c|c|c|c|}
    \hline
    Mean & Variance & Skewness & Excess Kurtosis  \\ \hline
    1.0 & 0.179 993 877 691 8 & 0.497 063 620 491 8 & 0.126 699 848 039 9 \\ \hline
    \end{tabular}
    \caption{The first four moments of the spacing $D$ near zero in the bulk up to 13 digits. Total computation time is 0.052 second. One can compare this result to Table 8 of \cite{bornemann2010numerical}. \label{tab:p0moments}}
\end{table}

\par One benefit of our formulae, such as Proposition~\ref{prop:twpdf}, \eqref{eq:pdfhardedge}, \eqref{eq:F0fredholm}, and \eqref{eq:p0freholm}, is an efficient and accurate numerical computation. We provide numerical experiments that compare our computations of the Tracy--Widom PDF, \eqref{eq:tracywidompdf}, $f_2(s) = \Kai(s, s)\det(I - \Kai^{(s)}\rt_{(s, \infty)})$ with some other numerical approaches. Numerical results show that our expressions can be as efficient and potentially more accurate. 

\par One approach for computing the Tracy--Widom PDF is an automatic differentiation on $F_2(s)$ combined with the Fredholm determinant computation \cite{bornemann2009numerical} of $F_2(s) = \det(I-\Kai\rt_{(s,\infty)})$. Another approach has also been suggested recently in \cite[Eq (37b)]{bornemann2022stirling}, 
\begin{equation*}
    f_2(s) = -F_2(s)\cdot \tr\big((I-K)^{-1}K'\big)\rt_{L^2(s, \infty)} = F_2(s)\cdot \langle(I-K)^{-1} \Ai, \Ai\rangle_{L^2(s, \infty)},
\end{equation*}
where $K'$ is defined as $K'(x, y) = \left(\frac{\partial}{\partial x}+\frac{\partial}{\partial x}\right)K(x, y)$. See equation (34b) and nearby discussion in \cite{bornemann2022stirling}. In Table~\ref{tab:twpdfcompare}, we compare the accuracy of the computation of $f_2(s)$, for different numbers of quadrature points and $s$ values, using two different methods: (1) Equation (37b) of \cite{bornemann2022stirling} and (2) our approach, \eqref{eq:tracywidompdf}. Comparison for other random matrix statistics such as in the bulk (sine kernel) or hard-edge (Bessel kernel) for the values of the derivative of log determinant and comparison against the automatic differentiation could also be found in the code \href{https://github.com/sw2030/RMTexperiments/blob/main/codes/prime-computation.ipynb}{\texttt{prime-computation}}.

\begin{table}[h]
    \centering
    \begin{tabular}{r|cc|cc}
     & \multicolumn{2}{c|}{$m=10$} & \multicolumn{2}{c}{$m=20$}\\ 
    $s\,\,$ & Eq (37b) of \cite{bornemann2022stirling} & Eq \eqref{eq:tracywidompdf} & Eq (37b) of \cite{bornemann2022stirling} & Eq \eqref{eq:tracywidompdf}\\ \hline \hline
    -4.0 &  $3.79\times 10^{-1}$ & $2.76\times 10^{-2}$ & $3.72\times 10^{-7}$ &  $5.60\times 10^{-7}$  \\
    -3.5 & $2.94\times 10^{-2}$  & $6.13\times 10^{-2}$ & $5.14\times 10^{-8}$ &  $8.25\times 10^{-9}$ \\ 
    -3.0 & $2.48\times 10^{-2}$  & $1.68\times 10^{-2}$ & $1.07\times 10^{-8}$  & $4.18\times 10^{-9}$ \\
    -2.5 & $8.76\times 10^{-3}$ &  $1.89\times 10^{-3}$ & $9.41\times 10^{-11}$ & $1.39\times 10^{-10}$ \\
    -2.0 & $1.41\times 10^{-3}$ &  $2.55\times 10^{-3}$ & $1.28\times 10^{-10}$ & $2.11\times 10^{-13}$ \\
    -1.5 & $2.27\times 10^{-3}$ &  $1.70\times 10^{-3}$ & $4.70\times 10^{-11}$ & $3.33\times 10^{-12}$ \\
    -1.0 & $5.40\times 10^{-4}$ & $3.87\times 10^{-5}$  & $3.80\times 10^{-11}$ & $1.70\times 10^{-12}$ \\
    -0.5 & $5.61\times 10^{-4}$ & $3.06\times 10^{-6}$  & $7.12\times 10^{-12}$ & $1.87\times 10^{-14}$ \\
    0.0 & $5.92\times 10^{-4}$ & $1.17\times 10^{-6}$   & $1.15\times 10^{-11}$ & $9.35\times 10^{-14}$ \\
    \end{tabular}
    \caption{Relative errors of two approaches for computing the PDF of the Tracy--Widom distribution $f_2(s)$, with $m$ point Gauss-Legendre quadratures. The expression $f_2(s) = F_2(s)\cdot \langle(I-K)^{-1} \Ai, \Ai\rangle_{L^2(s, \infty)}$ is used from Equation (37b) of \cite{bornemann2022stirling} and our approach uses \eqref{eq:tracywidompdf}, $f_2(s) = \Kai(s, s)\cdot \det(I - \Kai^{(s)}\rt_{(s, \infty)})$. Our approach shows slightly better overall accuracy but the errors converge very quickly to the machine precision in both methods.}
    \label{tab:twpdfcompare}
\end{table}

\subsection{The PDF and CDF of the second largest eigenvalue}\label{sec:2ndeig}
Some new formulae for the second largest (second extreme) eigenvalue could be obtained from Proposition~\ref{prop:conditionalkernel}. Let us again take the soft-edge and the Airy kernel \eqref{eq:airykernel} as an example. A standard formula for the CDF of the $k^\text{th}$ largest eigenvalue is
\begin{equation}\label{eq:ktheigvalreccurence}
    F_2(k;s) = \sum_{m=0}^{k-1}\frac{(-1)^m}{m!}\frac{d^m}{dz^m}\det\left(I - z\Kai\rt_{(s, \infty)}\right)\big|_{z=1}.
\end{equation}
When $k=2$ we can derive a somewhat different formula for the CDF and PDF that does not involve differentiation or summation using the conditional DPP. For the CDF $\mathbb{P}(\lambda_2<s)$, we need to compute the probability that there is only one level lying in $(s, \infty)$ when $s$ is given. In the discrete DPP, the probability of having only one sample point is the trace of the $L$ kernel, where $L=(I-K)^{-1}K$, divided by $\det(I+L)$, which also holds similarly in the continuous case. Thus we obtain
\begin{equation*}
    F_2(2;s) = \underbrace{\tr\left((I-\Kai)^{-1}\Kai\right)\rt_{(s,\infty)}}_{\tr(L)}\cdot\underbrace{\det(I-\Kai\rt_{(s,\infty)})}_{\det(I+L)^{-1}}.
\end{equation*}

\par For the PDF of the second largest eigenvalue, we need to fix an eigenvalue at $s$ and proceed similarly as in the CDF. More precisely, we multiply (1) the $1$-point correlation function at $s$ and (2) the probability that there is only a single eigenvalue above $s$, conditioned on (1). That is,
\begin{align}\nonumber
    f_2(2;s) &= \frac{d}{ds}F_2(2;s) \\\label{eq:tw2ndpdf}
    &= \underbrace{\Kai(s, s)}_{\text{level at $s$}} \cdot 
    \underbrace{\tr\left((I-\Kai^{(s)})^{-1}\Kai^{(s)}\right)\rt_{(s,\infty)}\det(I-\Kai^{(s)}\rt_{(s,\infty)})}_{\text{Only one eigenvalue in $(s, \infty)$ given a level at $s$}}.
\end{align}
Generalizing this, we obtain the following proposition. 
\begin{proposition}\label{prop:2ndcdfpdf}
For a random matrix that has its eigenvalue $n$-point correlation function given as in Proposition~\ref{prop:conditionalkernel} with a kernel $K$, the CDF and PDF of the second largest eigenvalue $\lambda_2$ are
\begin{gather}\label{eq:2ndcdf}
    F^{\lambda_2}(s) = \tr\left((I-K)^{-1}K\right)\rt_{(s,\infty)}\cdot\det(I-K\rt_{(s,\infty)}), \\ \label{eq:2ndpdf}
    f^{\lambda_2}(s) = K(s, s) \cdot \tr\left((I-K^{(s)})^{-1}K^{(s)}\right)\rt_{(s,\infty)}\cdot\det(I-K^{(s)}\rt_{(s,\infty)}),
\end{gather}
where the kernel $K^{(s)}$ is given as \eqref{eq:newkernel}. 
\end{proposition}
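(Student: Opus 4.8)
The plan is to deduce both formulas from one elementary fact about determinantal point processes --- the probability of seeing \emph{exactly one} point in a set --- together with the conditional-kernel identity of Proposition~\ref{prop:conditionalkernel}. For a continuous DPP with kernel $K$ and a set $J$ on which $K\rt_J$ is trace class, write $N_J$ for the number of points in $J$. The first step is the generating-function identity $\mathbb{E}[z^{N_J}] = \det(I+(z-1)K\rt_J)$ (the continuous, Fredholm-sense analogue of the discrete statement). Expanding around $z=0$: the constant term gives $\mathbb{P}(N_J=0) = \det(I-K\rt_J)$, which is \eqref{eq:detImKcont}, and the coefficient of $z$ gives
\[
    \mathbb{P}(N_J = 1) = \frac{d}{dz}\Big|_{z=0}\det\big(I + (z-1)K\rt_J\big) = \tr\big((I-K)^{-1}K\big)\rt_J \cdot \det(I - K\rt_J).
\]
The last equality is the manipulation sketched just before the proposition: diagonalizing $K\rt_J$ with eigenvalues $\mu_i$ one has $\det(I+(z-1)K\rt_J) = \prod_i(1+(z-1)\mu_i)$, whose derivative at $z=0$ is $\sum_i \mu_i\prod_{j\ne i}(1-\mu_j) = \big(\sum_i \mu_i/(1-\mu_i)\big)\prod_j(1-\mu_j)$; equivalently, with $L = (I-K\rt_J)^{-1}K\rt_J$ one has $\det(I+L)=\det(I-K\rt_J)^{-1}$ and $\mathbb{P}(N_J=1)=\tr(L)/\det(I+L)$.

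For the CDF \eqref{eq:2ndcdf}: the event $\{\lambda_2 < s\}$ is the event that $(s,\infty)$ contains at most one eigenvalue, and applying the identity above with $J=(s,\infty)$ and the kernel $K$ gives the right-hand side of \eqref{eq:2ndcdf} for the $N_{(s,\infty)}=1$ contribution (the $N_{(s,\infty)}=0$ part adds $\det(I-K\rt_{(s,\infty)}) = F^{\lambda_1}(s)$, which one may prefer to keep). For the PDF \eqref{eq:2ndpdf} I would argue exactly as for the Tracy--Widom PDF in Proposition~\ref{prop:twpdf}: to first order in $\delta s$, $\{\lambda_2 \in (s,s+\delta s)\}$ is the event that there is an eigenvalue in $(s,s+\delta s)$ and exactly one eigenvalue in $(s+\delta s,\infty)$; by \eqref{eq:npointcordef} the probability of an eigenvalue in $(s,s+\delta s)$ is $K(s,s)\,\delta s + o(\delta s)$, and, conditioned on it, Proposition~\ref{prop:conditionalkernel} replaces the law of the remaining eigenvalues by the DPP with kernel $K^{(s)}$. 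Dividing by $\delta s$ and letting $\delta s\to 0$,
\[
    f^{\lambda_2}(s) = K(s,s)\cdot \mathbb{P}_{K^{(s)}}\big(N_{(s,\infty)}=1\big) = K(s,s)\cdot\tr\big((I-K^{(s)})^{-1}K^{(s)}\big)\rt_{(s,\infty)}\cdot\det(I-K^{(s)}\rt_{(s,\infty)}),
\]
by the $N_J=1$ formula applied to the conditional kernel $K^{(s)}$ on $(s,\infty)$. As a consistency check one can instead differentiate \eqref{eq:2ndcdf} in $s$, using Corollary~\ref{cor:derivativekernel} (with $a=s$) for $\tfrac{d}{ds}\det(I-K\rt_{(s,\infty)})$; the conditional kernel then appears automatically and the derivative of the trace factor recombines to \eqref{eq:2ndpdf}.

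The only genuinely analytic point is the first step in the Fredholm setting: one needs $K\rt_{(s,\infty)}$ and $K^{(s)}\rt_{(s,\infty)}$ to be trace class --- which holds for the Airy, Bessel, sine, Hermite, \dots\ kernels on the relevant rays --- and the legitimacy of the generating-function identity and of differentiating it term by term at $z=0$, so that $\sum_i \mu_i/(1-\mu_i)$ converges. A secondary technical point, in the PDF argument, is that the ``one eigenvalue in $(s,s+\delta s)$, exactly one in $(s+\delta s,\infty)$'' decomposition captures $\mathbb{P}(\lambda_2\in(s,s+\delta s))$ up to $o(\delta s)$: the contributions from two or more eigenvalues in the small window, and from replacing $(s+\delta s,\infty)$ by $(s,\infty)$, must be $o(\delta s)$, which follows from continuity of $K$ and local boundedness of the correlation functions. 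For the standard kernels both points are routine, so the substance of the proposition lies in the $N_J=1$ identity and in Proposition~\ref{prop:conditionalkernel}.
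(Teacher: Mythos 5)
Your argument is essentially the paper's: the paper likewise derives \eqref{eq:2ndcdf} from the fact that the probability of a DPP having exactly one point in $J$ is $\tr(L)/\det(I+L)$ with $L=(I-K)^{-1}K$ (your generating-function computation is just a cleaner proof of that identity), and it obtains \eqref{eq:2ndpdf} by multiplying the one-point density $K(s,s)$ by the same exactly-one-point probability for the conditional kernel $K^{(s)}$ of Proposition~\ref{prop:conditionalkernel}. One substantive point in your write-up deserves emphasis: as you note, the right-hand side of \eqref{eq:2ndcdf} equals $\mathbb{P}(N_{(s,\infty)}=1)=\mathbb{P}(\lambda_1>s\geq\lambda_2)$, so the genuine CDF $\mathbb{P}(\lambda_2\leq s)$ requires the additional $N_{(s,\infty)}=0$ term $\det(I-K\rt_{(s,\infty)})$, consistent with \eqref{eq:ktheigvalreccurence} at $k=2$, whereas the paper identifies the CDF with the exactly-one-level probability and omits that term. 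Consequently your closing consistency check (differentiating the CDF to recover \eqref{eq:2ndpdf}) is valid only for the version with the extra term kept; differentiating \eqref{eq:2ndcdf} as printed yields $f^{\lambda_2}(s)-f^{\lambda_1}(s)$ rather than $f^{\lambda_2}(s)$.
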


\par Proposition~\ref{prop:2ndcdfpdf} could be used with the Airy kernel (soft-edge) as well as finite $N$ kernels like the Hermite kernel. In the hard-edge scaling the interval $(0, s)$ is used instead of $(s, \infty)$ with the Bessel kernel for the second smallest eigenvalue. Again, expressions \eqref{eq:2ndcdf} and \eqref{eq:2ndpdf} yield highly accurate numerical values using Bornemann's approach \cite{bornemann2009numerical}. Table \ref{tab:momentssecondeig} is the computed first four moments of the two smallest eigenvalues ($\lambda_1 < \lambda_2$) in the hard-edge and soft-edge using PDF formulas of Section~\ref{sec:extremepdf} and \eqref{eq:2ndpdf}.

\begin{table}[h]
    \centering
    \begin{tabular}{cc|r|r|r|r}
     & & Mean & Variance & Skewness & Kurtosis\\ \hline \hline
    \multirow{2}{*}{Soft} & $\lambda_1$ & -1.771 087 & 0.813 195 & 0.224 084 & 0.093 448 \\
    & $\lambda_2$ & -3.675 437 & 0.540 545 & 0.125 027 & 0.021 740 \\ \hline
    Hard & $\lambda_1$ & 4.000 000 & 16.000 000 & 2.000 000 & 6.000 000\\
    $\alpha=0$ & $\lambda_2$ & 24.362 715 & 140.367 319 & 0.924 147 & 1.225 112\\ \hline
    Hard & $\lambda_1$ & 10.873 127 & 55.745 139 & 1.320 312 & 2.541 266\\
    $\alpha=1$ & $\lambda_2$ & 40.812 203 & 259.898 510 & 0.737 801 & 0.764 990\\ \hline
    Hard & $\lambda_1$ & 20.362 715 &  124.367 319 & 1.015 815 & 1.461 306\\
    $\alpha=2$ & $\lambda_2$ & 60.112 814 & 416.851 440 & 0.622 605 & 0.532 483 
    \end{tabular}
    \caption{The first four moments (the last column is the excess kurtosis) of the two extreme eigenvalues, soft-edge and hard-edge. Computation time is less than a second for the whole table. See codes \href{https://github.com/sw2030/RMTexperiments/blob/main/codes/cor-coeff-softedge.ipynb}{\texttt{cor-coeff-softedge}} and \href{https://github.com/sw2030/RMTexperiments/blob/main/codes/cor-coeff-hardedge.ipynb}{\texttt{cor-coeff-hardedge}} for the implementation.}
    \label{tab:momentssecondeig}
\end{table}

\subsection{The joint distribution of the $k$ largest eigenvalues}\label{sec:klargest}

\par In this section we derive an expression for the joint distribution of the $k$ largest (or smallest) eigenvalues in terms of a Fredholm determinant.

\par In the case of $k=2$, the joint density of the two smallest eigenvalues of the Laguerre ensemble has been studied in \cite{forrester2007distribution}. In addition, the joint distribution of the two smallest eigenvalues at the hard-edge is obtained in terms of the solution of a differential equation that resembles Jimbo-Miwa-Okamoto $\sigma$-form of the Painlev{\'e} III. Analogously \cite{witte2013joint}, studies the joint density of the two largest eigenvalues at the soft-edge, obtaining an expression in terms of Painlev{\'e} II transcendents and isomonodromic components using hard-to-soft edge transition \cite{borodin2003increasing}. 

\par With Proposition~\ref{prop:conditionalkernelmpoints}, we derive an expression for the joint PDF of the $k$ largest eigenvalues in terms of the Fredholm determinant. Let us take for example $k=2$ and consider the $N\times N$ GUE. For simplicity let $K$ be the Hermite kernel \eqref{eq:HermiteKernel} and $p$ the $N\times N$ GUE eigenvalue $n$-point correlation function. The $n$-point correlation function after forcing (and conditioning on) two eigenvalues $x_1, x_2$ is 
\begin{equation*}
    p^{(x_1, x_2)}(y_1, \dots, y_n) = \det\left(\left[K^{(x_1, x_2)}(y_i, y_j)\right]_{i, j=1, \dots, n}\right),
\end{equation*}
where the kernel $K^{(x_1, x_2)}$ is given as
\begin{equation*}
    K^{(x_1, x_2)}(x, y) := K(x, y) - \twoone{K(x, x_1)}{K(x, x_2)}^T\twotwo{K(x_1, x_1)}{K(x_1, x_2)}{K(x_2, x_1)}{K(x_2, x_2)}^{-1}\twoone{K(x_1, y)}{K(x_2, y)}. 
\end{equation*}

Then, the joint PDF $f^{(\lambda_1, \lambda_2)}$ of the two largest eigenvalues $\lambda_1\geq \lambda_2$ is obtained as the following conditional probability argument assuming $x_1 > x_2$,
\begin{align*}
    f^{(\lambda_1, \lambda_2)}(x_1, x_2) &= p(x_1, x_2) \cdot \prob{No other eigenvalues in $(x_2, \infty)\,\,|\,\,\lambda_1=x_1, \lambda_2=x_2$}\\ 
    &= \det\left(\twotwo{K(x_1, x_1)}{K(x_1, x_2)}{K(x_2, x_1)}{K(x_2, x_2)}\right)\cdot \det(I - K^{(x_1,x_2)}\rt_{(x_2, \infty)}),
\end{align*}
and $f^{(\lambda_1, \lambda_2)}(x_1, x_2)$ vanishes when $x_1\leq x_2$ .

\par Similarly for the two smallest eigenvalues of the LUE, one can replace the interval $(x_2, \infty)$ of the above right hand side with $(0, x_2)$, where $0\leq x_1 \leq x_2$ being the two smallest eigenvalues. Again this approach works for any random matrix levels with determinantal $n$-point correlation function. 
%This representation is amenable to numerical computation since the Fredholm determinant $\det(I-K^{(x_1, x_2)}\rt_{(x_2, \infty)}$ could be computed accurately by the method introduced in \cite{bornemann2009numerical}. 

\par Extending to general $k$'s we obtain an expression for the joint PDF of the $k$ largest eigenvalues, $f^{(\lambda_1, \dots, \lambda_k)}$, as the following.
\begin{proposition}\label{prop:klargesteig}
For a random matrix whose eigenvalue $n$-point correlation function is given in determinantal representation \eqref{eq:npointcordef} with a kernel $K$, we have the following joint PDF of the $k$ largest eigenvalues $\lambda_1 \geq \dots \geq \lambda_k$
\begin{equation*}
    f^{(\lambda_1, \dots, \lambda_k)}(x_1, \dots, x_k) = \det\left(\left[K(x_i,x_j)\right]_{i, j=1, \dots, k}\right)\cdot\det\left(I-K^{(x_1, \dots, x_k)}\rt_{(x_k, \infty)}\right),
\end{equation*}
for $x_1> \dots > x_k$ and vanishes otherwise, where the kernel $K^{(x_1, \dots, x_k)}(x, y)$ is defined as,
\begin{equation*}
    K^{(x_1, \dots, x_k)}(x, y) \!=\! K(x,y) - 
    \begin{bmatrix} K(x, x_1) \\ \vdots \\ K(x, x_n) \end{bmatrix}^T
    \!\!\begin{bmatrix} K(x_1, x_1) & \!\!\!\!\cdots & \!\!\!\!K(x_1, x_n) \\
     \vdots & \!\!\!\!\ddots & \!\!\!\!\vdots \\ 
     K(x_n, x_1) & \!\!\!\!\cdots & \!\!\!\!K(x_n, x_n) \end{bmatrix}^{-1}\!\!
    \begin{bmatrix} K(x_1, y) \\ \vdots \\ K(x_n, y) \end{bmatrix}.
\end{equation*}
\end{proposition}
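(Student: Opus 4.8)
The plan is to reduce Proposition~\ref{prop:klargesteig} to the $k$-point conditional kernel machinery already established in Proposition~\ref{prop:conditionalkernelmpoints}, exactly mirroring the $k=2$ calculation carried out just above its statement. First I would observe that, by the continuous analogue of the multiplicativity of probabilities, the joint density of the $k$ largest eigenvalues at a point $x_1 > \dots > x_k$ factors as
\begin{equation*}
    f^{(\lambda_1, \dots, \lambda_k)}(x_1, \dots, x_k) = p(x_1, \dots, x_k)\cdot \prob{no eigenvalue in $(x_k, \infty)$ $|$ eigenvalues at $x_1, \dots, x_k$},
\end{equation*}
valid because the event $\{\lambda_1 = x_1, \dots, \lambda_k = x_k\}$ (in the infinitesimal sense of \eqref{eq:npointcordef}) together with "nothing above $x_k$" is precisely the event that the top $k$ eigenvalues sit at $x_1, \dots, x_k$ in decreasing order. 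The first factor is the $k$-point correlation function $p(x_1, \dots, x_k) = \det([K(x_i, x_j)]_{i,j=1,\dots,k})$ by hypothesis \eqref{eq:npointdeterminant}. The second factor is a gap probability for the \emph{conditional} DPP, so by Proposition~\ref{prop:conditionalkernelmpoints} its kernel is $K^{(x_1, \dots, x_k)}$, and by the continuous gap formula \eqref{eq:detImKcont} the probability of no eigenvalue in $(x_k, \infty)$ under that conditional DPP equals $\det(I - K^{(x_1, \dots, x_k)}\!\restriction_{(x_k, \infty)})$. Combining the two factors gives the claimed formula.

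The steps in order: (1) justify the product decomposition above by a $\delta x \to 0$ limit argument, writing $\prob{\text{eigenvalues at } \delta x\text{-intervals around } x_1,\dots,x_k \text{ and none in } (x_k,\infty)}$ as the product of $\prob{\text{eigenvalues at those intervals}}$ and the conditional probability of emptiness of $(x_k, \infty)$, then dividing by $(\delta x)^k$ and passing to the limit; the first piece tends to $p(x_1,\dots,x_k)$ by \eqref{eq:npointcordef} and the conditional piece is continuous in the interval endpoints so it tends to the gap probability of the conditional DPP. (2) Invoke Proposition~\ref{prop:conditionalkernelmpoints} to identify the conditional $n$-point correlation functions as determinants of $K^{(x_1,\dots,x_k)}$, which is exactly the statement that the conditional process is itself a continuous DPP with that kernel. (3) Apply \eqref{eq:detImKcont} to that DPP on the interval $J = (x_k, \infty)$. (4) Note the vanishing for $x_1 \le \dots \le x_k$ not strict is immediate since ordered eigenvalues cannot coincide (the $k$-point correlation determinant degenerates) and the "largest $k$" labelling forces the strict ordering.

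The main obstacle is step (1): making the conditioning rigorous in the continuous setting. One must be careful that "conditioning on an eigenvalue in an infinitesimal interval around each $x_i$" and then letting the interval widths shrink commutes with the gap probability on $(x_k, \infty)$ — i.e. that $\det(I - K^{(x_1,\dots,x_k)}\!\restriction_{(x_k + \varepsilon, \infty)}) \to \det(I - K^{(x_1,\dots,x_k)}\!\restriction_{(x_k, \infty)})$ as $\varepsilon \to 0$, which follows from continuity of the Fredholm determinant in the interval endpoint, and that the conditional kernel $K^{(x_1,\dots,x_k)}$ obtained by conditioning on the shrinking intervals converges to the kernel built from the pointwise values $K(x_i, x_j)$. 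Both are routine once one assumes the regularity of $K$ that holds for all the kernels of interest (Hermite, Airy, Bessel, etc.), but they are the only non-formal part; everything else is bookkeeping with block LU, exactly as in Proposition~\ref{prop:conditionalkernelmpoints}, and the verification that the $k\times k$ leading factor $\det([K(x_i,x_j)]_{i,j})$ is nonzero on the region $x_1 > \dots > x_k$ (which it is, being a correlation function, generically positive there). I would also remark that the formula is symmetric in an appropriate sense: relabelling which extreme (largest vs. smallest) and replacing $(x_k, \infty)$ by $(-\infty, x_k)$ or $(0, x_k)$ covers the hard-edge and smallest-eigenvalue variants with no change to the argument.
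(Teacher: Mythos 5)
Your proposal is correct and follows essentially the same route as the paper: the paper derives the $k=2$ case by exactly this factorization---the $k$-point correlation determinant times the conditional gap probability $\det(I - K^{(x_1,\dots,x_k)}\rt_{(x_k,\infty)})$ obtained from Proposition~\ref{prop:conditionalkernelmpoints} and the gap formula \eqref{eq:detImKcont}---and then states that the general $k$ case extends in the same way. Your additional remarks on the $\delta x \to 0$ limit and continuity of the Fredholm determinant in the interval endpoint only make explicit what the paper leaves implicit.
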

In the following sections we give some examples of applications of Proposition~\ref{prop:klargesteig} with numerical experiments. Furthermore the first row of Figure~\ref{fig:gallery} contains some visualizations of these eigenvalue statistics that could be obtained from the joint PDF formula of $k=2$ and $k=3$ extreme eigenvalues.

\subsubsection{Correlation of the two extreme eigenvalues}\label{sec:correlation}
\par In \cite{bornemann2010numerical} the correlation coefficient $\rho(\lambda_1, \lambda_2)$ of the two largest eigenvalues at the soft-edge is computed from matrix valued kernels and a generating function approach similar to \eqref{eq:ktheigvalreccurence}. On the other hand, we can compute $\rho(\lambda_1, \lambda_2)$ with the following steps:
\begin{enumerate}\setlength\itemsep{0.4em}
    \item Compute $\mathbb{E}\lambda_1\lambda_2$ using the joint PDF $f^{(\lambda_1, \lambda_2)}(x_1, x_2)$ from Proposition~\ref{prop:klargesteig} and 2-dimensional Gauss quadrature on a (truncated) triangular region. 
    \item Compute $\mathbb{E}\lambda_1$, $\mathbb{E}\lambda_2$, $\sigma\lambda_1$, $\sigma\lambda_2$ using the PDF expressions \eqref{eq:tracywidompdf}, \eqref{eq:tw2ndpdf} and the Gauss-Legendre quadrature. Infinite intervals such as $(s, \infty)$ are handled by the strategy described in \cite[Section 7]{bornemann2009numerical}. 
    \item Compute $\rho(\lambda_1, \lambda_2) = (\mathbb{E}\lambda_1\lambda_2-\mathbb{E}\lambda_1\mathbb{E}\lambda_2)/(\sigma\lambda_1\sigma\lambda_2)$.
\end{enumerate}
The total computing time for obtaining 11 accurate digits, $\rho(\lambda_1, \lambda_2) = 0.50564723159$, is 118 seconds. Previously reported computing time is 16 hours \cite{bornemann2010numerical}. In addition to the soft-edge, we compute correlation coefficients of the two smallest eigenvalues at the hard-edge for $\alpha=0,1,2$. See Table \ref{tab:hardedgestats} for the result. 

\begin{table}[h]
    \centering
    \begin{tabular}{c|c}
    & $\rho(\lambda_1, \lambda_2)$ \\ \hline \hline
    Soft-edge & 0.505 647 231 59 \\ \hline
    Hard-edge $\alpha=0$ & 0.337 619 085 22 \\ \hline   
    Hard-edge $\alpha=1$ & 0.391 735 693 02 \\ \hline
    Hard-edge $\alpha=2$ & 0.417 187 915 41 \\ \hline
    \end{tabular}
    \caption{Computed values of correlation coefficients of the two largest eigenvalues at the soft-edge and the two smallest eigenvalues ($\lambda_1 < \lambda_2$) at the hard-edge scaling limit. Computation time is 117 seconds for the soft-edge and 139 seconds for the whole hard-edge correlation coefficients. See codes \href{https://github.com/sw2030/RMTexperiments/blob/main/codes/cor-coeff-softedge.ipynb}{\texttt{cor-coeff-softedge}} and \href{https://github.com/sw2030/RMTexperiments/blob/main/codes/cor-coeff-hardedge.ipynb}{\texttt{cor-coeff-hardedge}} for the detailed implementation.}
    \label{tab:hardedgestats}
\end{table}

\subsubsection{First eigenvalue spacing}\label{sec:spacing}

We compute moments of the distance between the first two eigenvalues (the first eigenvalue spacing) by computing the PDF and CDF using Proposition~\ref{prop:conditionalkernel}. In the second row of Figure~\ref{fig:gallery} we also plot some distributions of the first eigenvalue spacing, using the expressions we derive in this section.

\par The probability that the first spacing is at least $d$ can be obtained by integrating over $x$ the probability density of no further eigenvalues existing in $(x-d, \infty)$ given a level at $x$. From Proposition~\ref{prop:conditionalkernel} such a probability density is given as (for example, at the soft-edge)
\begin{equation*}
    \Kai(x, x)\cdot \det\left(I - \Kai^{(d)}\rt_{(x-d,\infty)}\right),
\end{equation*}
thus we obtain the CDF $G(s)$ of the first spacing,
\begin{equation*}
    G(d) = 1 - \int_\mathbb{R} \Kai(x, x)\det(I - \Kai^{(d)}\rt_{(x-d, \infty)})dx.
\end{equation*}
Moreover, simply using the joint PDF $f^{(\lambda_1, \lambda_2)}$ of the two eigenvalues computed above we obtain
\begin{equation}\label{eq:spacingpdf}
    A(d) = \int_\mathbb{R} d\cdot f^{(\lambda_1, \lambda_2)}(x, x-d)dx,
\end{equation}
which is the PDF of the first spacing. For the implementation we use Gauss-Legendre quadrature for the integration. 

\begin{table}[h]
    \centering
    \begin{tabular}{|c|c|c|c|}
    \hline
    Mean & Variance & Skewness & Excess Kurtosis  \\ \hline
    1.904 350 489 721 & 0.683 252 055 105 & 0.562 291 976 040 & 0.270 091 960 715 \\ \hline
    \end{tabular}
    \caption{The first four moments of the distance between the first two eigenvalues of the soft-edge scaling, up to 12 digits. Equation \eqref{eq:spacingpdf} is used to compute these values. \label{tab:spacingmoments}}
\end{table}

\par Table~\ref{tab:spacingmoments} is the computed first four moments of the first eigenvalue spacing, up to 12 digits, with a total runtime of 236 seconds. These moments were already computed in \cite{witte2013joint} up to $5\sim 9$ digits, with a reported computing time of 5 hours. With \eqref{eq:spacingpdf} computation of the moments up to 9 digits needs 29 seconds of a runtime. Values of $A, G$ of the soft-edge scaling are verified up to 8 digits against Table 2 of \cite{witte2013joint}, with a total computing time of 354 seconds. Computation of $A, G$ values and comparison with previously known values could be found in \href{https://github.com/sw2030/RMTexperiments/blob/main/codes/first-spacing.ipynb}{\texttt{first-spacing}}.

\subsection{Sampling the Airy process and Dyson Brownian motion using DPP}\label{sec:airyprocess}

\par In this section we add an additional (time) parameter $t$ as a random matrix changes through time according to some stochastic processes. A random matrix diffusion or Dyson process, for example Dyson Browian motion (GUE diffusion), is another example of a DPP in random matrix theory. 
%% Paper for multitime correlation
%% https://link.springer.com/content/pdf/10.1007/s10955-007-9421-y.pdf?pdf=button
%% See Theorem 5.2
A multitime correlation function for the $N\times N$ GUE diffusion is given in terms of a block matrix determinant with the \textit{extended Hermite kernel} $K$ \cite{tracy2004differential},
\begin{equation}\label{eq:multitimecor}
    p(x_{t_s,i_s}\,\,; \,\,s=1, \dots, n \text{ and } i_s = 1, \dots, m_s)
    = \det\left(\left[K_{j,k}\right]_{j,k = 1, \dots, n}\right),
\end{equation}
where $K_{j,k}$ is an $m_j \times m_k$ matrix,
\begin{equation*}
    K_{j,k} = \left[K(x_{t_j, i_j}, x_{t_k, i_k})\right]_{\substack{i_j = 1, \dots, m_j\\ i_k = 1, \dots, m_k}}.
\end{equation*}

\par This is essentially the density of eigenvalues of the random matrix stochastic process at times $\{t_s\}_{s = 1, \dots, n}$ each existing on positions $x_{t_s, 1},\dots, x_{t_s, m_s}$. The determinantal multitime correlation function \eqref{eq:multitimecor} also holds for the soft-edge scaling, LUE, and other orthogonal polynomial ensembles with appropriately computed kernels as proved in \cite{eynard1998matrices}. Indeed the block matrix determinant can be discretized to a block matrix $K$ kernel for a DPP as prescribed in Section~\ref{sec:contDPPsampling}. 

\par In particular, with the \textit{extended Airy kernel}
\begin{equation*}
    K_{s, t}^\text{ext}(x, y) = \left\{\begin{array}{cc}
        \int_0^\infty e^{-\lambda(s-t)}\Ai(x+\lambda)\Ai(y+\lambda)d\lambda & \text{if }s\geq t \\
        -\int_{-\infty}^0 e^{-\lambda(s-t)}\Ai(x+\lambda)\Ai(y+\lambda)d\lambda & \text{if } s< t
        \end{array}\right.,
\end{equation*}
one has the multitime correlation function of the soft-edge scaling limit by \eqref{eq:multitimecor}. A special interest lies in the largest eigenvalue of this process and is called the \textit{Airy$_2$ process}, or just simply, the \textit{Airy process}. Obviously extending \eqref{eq:detImK}, the largest eigenvalue process is described by the following Fredholm determinant,
\begin{equation}\label{eq:FdetAiryprocess}
    \prob{$\cA(t_1)\leq s_1, \dots, \cA(t_n)\leq s_n$} = \det\left(I - K\rt_{L^2(s_1,\infty)\oplus\cdots\oplus L^2(s_n, \infty)}\right),
\end{equation}
where $K$ is the block kernel given as
\begin{equation*}
    K = \begin{bmatrix}
    K_{t_1, t_1}^\text{ext} & \cdots & K_{t_1, t_n}^\text{ext} \\ 
    \vdots & \ddots & \vdots \\ 
    K_{t_n, t_1}^\text{ext} & \cdots & K_{t_n, t_n}^\text{ext}
    \end{bmatrix}.
\end{equation*}

\par The Airy process is related to a number of applications, including the polynuclear growth process \cite{johansson2003discrete,prahofer2002scale}, the NPR boundary process of the Aztec diamond domino tiling \cite{johansson2005arctic} (i.e., the top DR path as $n\to \infty$ in Section~\ref{sec:drpath}), totally asymmetric simple exclusion process (TASEP), corner growth process \cite{johansson2003discrete}, and eventually related to the KPZ universality class with the narrow wedge initial condition. 

\par An example of numerical experiments on the Airy process is \cite{bornemann2009numerical}, where Bornemann uses the $2\times 2$ block Fredholm determinant \eqref{eq:FdetAiryprocess} to compute the two-point covariance $\text{cov}(\mathcal{A}(0), \mathcal{A}(t))$ and compare those values against its large $t$ and small $t$ expansions. 

\begin{figure}[h]
    \centering
    \boxed{
    \includegraphics[width=0.95\textwidth]{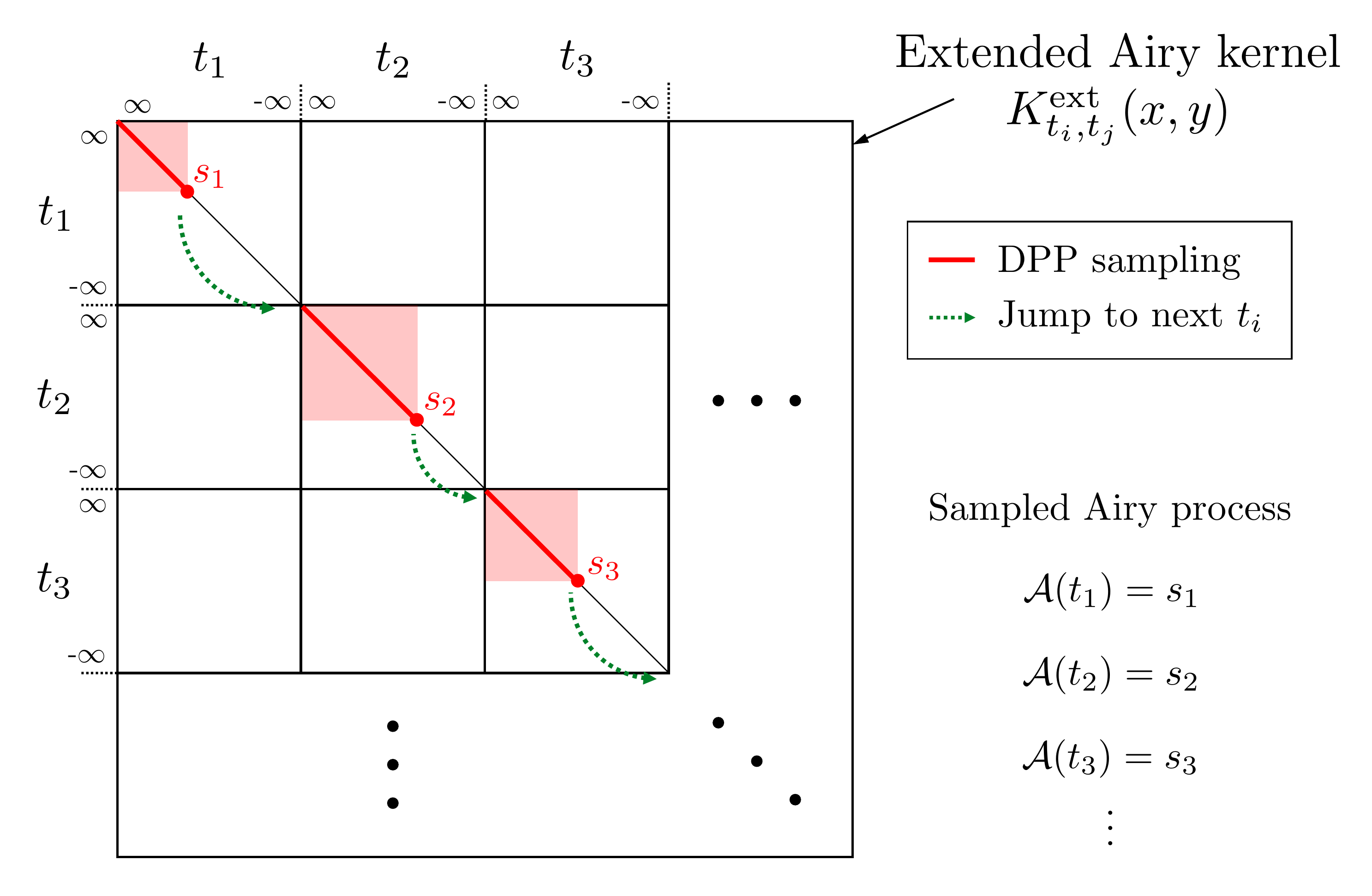}}
    \caption{Sketch of the sampling technique used to sample the Airy process. Each block represents discretized $K_{t_i, t_j}^\text{ext}$. At each timestep $t_i$ we perform Algorithm~\ref{alg:LUDPP}, starting from $+\infty$, until we hit the first (largest) eigenvalue. Then we jump to the next timestep, disregarding (not observing) the rest of eigenvalues at the current timestep. Note that each diagonal block of the kernel is the usual Airy kernel \eqref{eq:airykernel}.}
    \label{fig:airyprocesstechnique}
\end{figure}

\par Here we demonstrate another numerical experiment on the Airy process: Sampling Airy processes through multitime correlation function \eqref{eq:multitimecor} and the DPP sampler. We used a $201\times 201$ block matrix for sampling. The extended Airy kernel is non-Hermitian and non-projection, which can only be sampled by Algorithm~\ref{alg:LUDPP}. A simple multistep (multitime) modification of Algorithm~\ref{alg:LUDPP} is handy; In each timestep $t_i$, we proceed from $+\infty$ to $-\infty$, and jump to next block (next timestep $t_{i+1}$) when we find a largest eigenvalue at the current timestep. See Figure~\ref{fig:airyprocesstechnique} for the illustration of this algorithm. More details such as discretization and truncating the interval to get finite kernel could be found in Section~\ref{sec:contDPPsampling}.

\par In the left of Figure~\ref{fig:airyprocess} is the plot of five samples of the Airy process, sampled from the DPP defined by multitime correlation function of the (soft-edge scaled) GUE diffusion, i.e., the extended Airy kernel. We truncated the eigenvalue space to $[-5.0, 2.5]$, as probabilities that a sample from the Tracy--Widom distribution (which is the stationary distribution of the Airy process) larger than 2.5 and smaller than -5 are both around $2\times 10^{-5}$. The truncated eigenvalue domain is then discretized into 150 intervals, finally yielding $30150\times 30150$ kernel for the DPP.

\begin{figure}[h]
    \centering
    \includegraphics[width=0.49\textwidth]{airyprocess.pdf}
    \includegraphics[width=0.49\textwidth]{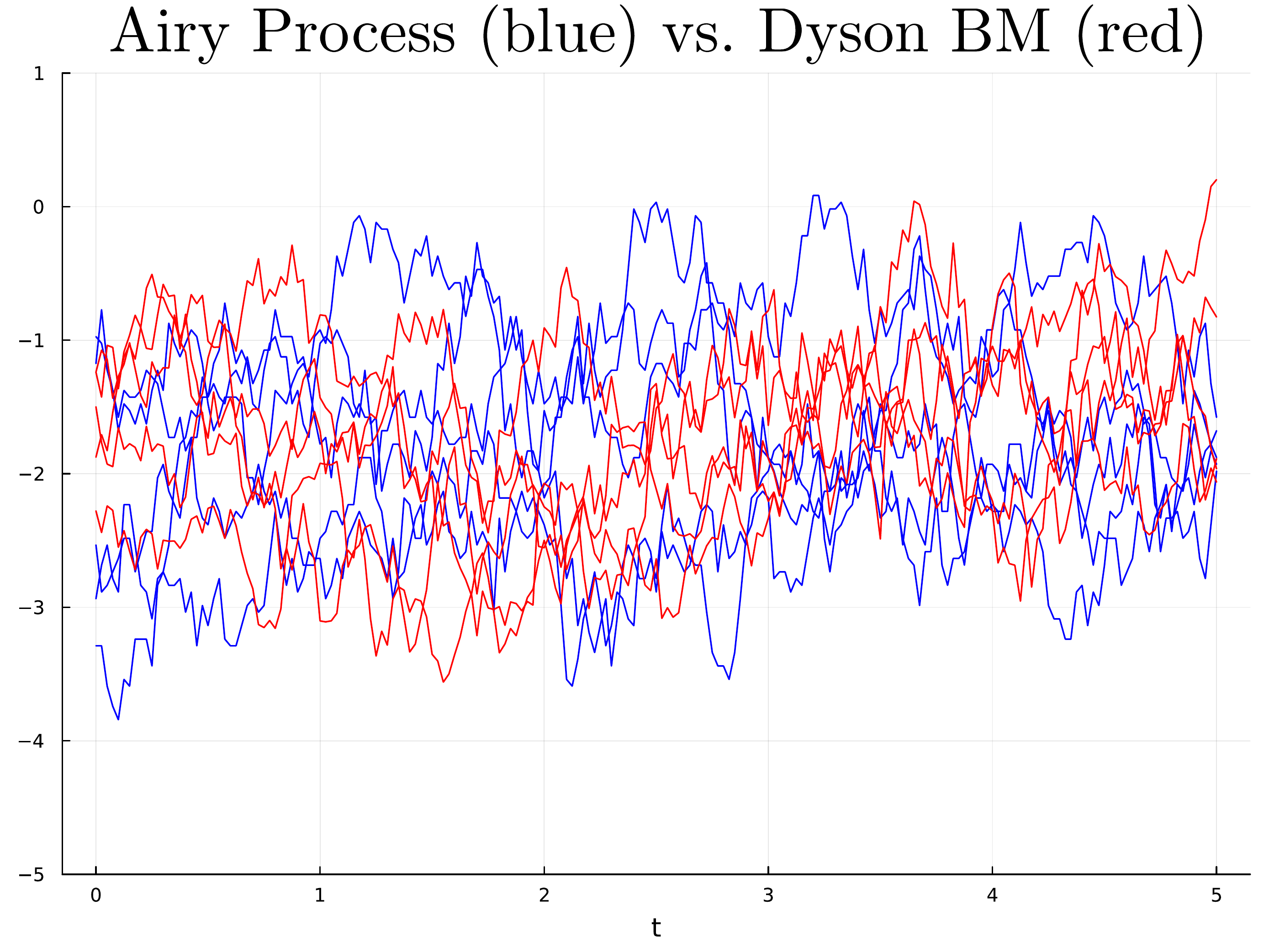}
    \captionsetup{singlelinecheck=off}
    \caption[.]{Left: Five samples of the Airy process from $t=0$ to $t=5$, sampled with $dt = 0.025$. Sampling time for a single sample is around 4 hours. Right: We additionally draw five samples of the largest eigenvalue process of $N=200$ Dyson Brownian motion, allowing the eye to observe the asymptotic behavior \eqref{eq:dbmtoairy}.}
    \label{fig:airyprocess}
\end{figure}

\par The Airy processes in Figure~\ref{fig:airyprocess} are non-asymptotic in a sense that they are not large $N$ asymptotics where $N\to \infty$ is the Airy process. It is known that $\lambda_{\text{max}}(t)$ of Dyson Brownian motion (GUE diffusion) recentered and rescaled according to
\begin{equation}\label{eq:dbmtoairy}
    \sqrt2 N^{\frac{1}{6}}\left(\lambda_\text{max}(N^{-\frac{1}{3}}t) - \sqrt{2N}\right),
\end{equation}
converges to the Airy process as $N\to\infty$. Samples of such large $N$ approximation are drawn in red in the right side of Figure~\ref{fig:airyprocess} by numerically simulating GUE diffusions (not using multitime extended Hermite kernel DPP).

\subsection*{Numerical experiment details}
All codes mentioned in the paper can be found online: \href{https://github.com/sw2030/RMTexperiments}{\texttt{https://github.com/sw2030/RMTexperiments}}. For numerical experiments (except Section~\ref{sec:airyprocess}) discussed in this work, we used a single core of an Apple M1 Pro CPU. For the Airy process sampling discussed in Section~\ref{sec:airyprocess}, we used 64 cores from four Xeon P8 CPUs for computing the DPP kernel and a single core of Xeon P8 CPU for sampling, from the MIT Supercloud server \cite{reuther2018interactive}. 

\subsection*{Acknowledgements}
We thank Jack Poulson for helpful comments, and especially pointing out that the dynamic sampling of the top DR path is in fact $O(n^3)$. We thank Dimitris Konomis and Aviva Englander for early versions of Dyson Brownian motion simulation that were begun as class projects. 
\vspace{0.5cm}

\setstretch{0.4}
\noindent
{\footnotesize This material is based upon work supported by the National Science Foundation under Grant No. DMS-1926686. The authors acknowledge the MIT SuperCloud and Lincoln Laboratory Supercomputing Center for providing HPC resources that have contributed to the research results reported within this paper. This material is based upon work supported by the National Science Foundation under grant no. OAC-1835443, grant no. SII-2029670, grant no. ECCS-2029670, grant no. OAC-2103804, and grant no. PHY-2028125. We also gratefully acknowledge the U.S. Agency for International Development through Penn State for grant no. S002283-USAID. The information, data, or work presented herein was funded in part by the Advanced Research Projects Agency-Energy (ARPA-E), U.S. Department of Energy, under Award Number DE-AR0001211 and DE-AR0001222. We also gratefully acknowledge the U.S. Agency for International Development through Penn State for grant no. S002283-USAID. The views and opinions of authors expressed herein do not necessarily state or reflect those of the United States Government or any agency thereof. This material was supported by The Research Council of Norway and Equinor ASA through Research Council project ``308817 - Digital wells for optimal production and drainage''. Research was sponsored by the United States Air Force Research Laboratory and the United States Air Force Artificial Intelligence Accelerator and was accomplished under Cooperative Agreement Number FA8750-19-2-1000. The views and conclusions contained in this document are those of the authors and should not be interpreted as representing the official policies, either expressed or implied, of the United States Air Force or the U.S. Government. The U.S. Government is authorized to reproduce and distribute reprints for Government purposes notwithstanding any copyright notation herein.}

\setstretch{1.0}
\bibliographystyle{plain}
\bibliography{bibliography}

\end{document}